\newcommand{\qed}{{ }\hfill$\Box$}
\newcommand{\inst}[1]{{}$^{#1}$}
\newcommand\ZZ{\mathbb{Z}}
\newcommand{\probname}{Prob}
\newcommand{\bin}{\ensuremath{\{0,1\}}}
\newcommand{\npol}{\ensuremath{\mathcal{NP}}}
\newcommand{\olrk}[1]{\ifx\nursymbol#1\else \mskip4.5mu plus 0.5mu\left(#1\right)\fi}
\newcommand{\elrk}[1]{\ifx\nursymbol#1\else \mskip4.5mu plus0.5mu\left[\mskip2.5mu plus0.5mu #1\right]\fi}
\newcommand{\negl}[1]{\ensuremath{\textrm{negl}\olrk{#1}}}
\newcommand{\prob}[1]{\ensuremath{\operatorname{\probname}\elrk{#1}}}
\newcommand{\xor}{\oplus}
\newcommand{\und}{\wedge}
\newcommand{\Angle}[1]{\ensuremath{\left\langle #1\right\rangle}}
\newcommand{\abs}[1]{\ensuremath{\left\lvert #1 \right\rvert}}
\newcommand{\norm}[1]{\ensuremath{\left\|#1\right\|}}
\newcommand{\advstyle}[1]{\mathcal{#1}}
\newcommand{\algostyle}[1]{\textsf{#1}}
\newcommand{\keystyle}[1]{\textit{#1}}
\newcommand{\schemestyle}[1]{\mathcal{#1}}
\newcommand{\adv}{\ensuremath{\advstyle{A}}}
\newcommand{\ddv}{\ensuremath{\advstyle{D}}}
\newcommand{\exec}{\ensuremath{\leftarrow}}
\newcommand{\pk}{\ensuremath{\keystyle{pk}}}
\newcommand{\sk}{\ensuremath{\keystyle{sk}}}
\newcommand{\sig}{\ensuremath{\algostyle{Sig}}}
\newcommand{\verify}{\ensuremath{\algostyle{Vf}}}
\newcommand{\sigsch}{\ensuremath{\schemestyle{S}}}
\newcommand{\A}{\mathcal{A}}
\newcommand{\X}{\mathcal{X}}
\newcommand{\Y}{\mathcal{Y}}
\newcommand{\D}{\mathcal{D}}
\newcommand{\B}{\mathcal{B}}
\newcommand{\skgen}{\ensuremath{\algostyle{SKGen}}}
\newcommand{\sign}{\sig}
\newcommand{\vf}{\ensuremath{\algostyle{Vf}}}
\newcommand{\svf}{\ensuremath{\algostyle{SVf}}}
\newcommand{\algfont}[1]{\ensuremath{\mathsf{#1}}}
\newcommand{\advfont}[1]{\ensuremath{\mathcal{#1}}}
\newcommand{\distr}{\ensuremath{\mathcal{D}}}
\newcommand{\hilbert}{\ensuremath{\mathcal{H}}}
\newcommand{\identity}{\mathbb{I}}
\newcommand{\qadv}{\ensuremath{\advfont{A}_\text{Q}}}
\newcommand{\qbdv}{\ensuremath{\advfont{B}_\text{Q}}}
\newcommand{\qext}{\ensuremath{\advfont{K}_\text{Q}}}
\newcommand{\qmeta}{\ensuremath{\advfont{M}_\text{Q}}}
\newcommand{\cL}{\ensuremath{\mathcal{L}}}
\newcommand{\cR}{\ensuremath{\mathcal{R}}}
\newcommand{\prov}{\ensuremath{\mathcal{P}}}
\newcommand{\ver}{\ensuremath{\mathcal{V}}}
\newcommand{\Sim}{\ensuremath{\mathsf{Sim}}}
\newcommand{\extractor}{\ensuremath{\mathsf{Ext}}}
\newcommand{\com}{\ensuremath{\mathsf{com}}}
\newcommand{\Com}{\ensuremath{\textsc{Com}}}
\newcommand{\Rnd}{\ensuremath{\textsc{Rnd}}}
\newcommand{\ch}{\ensuremath{\mathsf{ch}}}
\newcommand{\resp}{\ensuremath{\mathsf{rsp}}}
\newcommand{\Smplrnd}{\ensuremath{\textsc{SmplRnd}}}
\newcommand{\igen}{\ensuremath{\mathsf{Inst}}}
\newcommand{\prf}{\algfont{PRF}}
\newcommand{\secpar}{\ensuremath{\lambda}}
\newcommand{\myitem}[1][]{\item[\normalfont{\textsc{#1.}}]}
\newcommand{\wdctxt}{witness-independent commitment}
\newcommand{\wictxt}{\wdctxt}
\newcommand{\crs}{\ensuremath{\text{crs}}}
\newcommand{\poly}{\ensuremath{\text{poly}}}
\newcommand{\rand}{\ensuremath{\xleftarrow{\$}}}
\newcommand{\mysc}{\ensuremath{\textsf{SC}}}
\newcommand{\octxt}{oblivious commitments}
\newcommand{\ochead}{Oblivious Commitments}
\newcommand{\game}{\ensuremath{\textsc{Game}}}
\providecommand{\email}[1]{\texttt{#1}}
\providecommand{\mytitle}[2][\@noarg]{\title{\textbf{#2\ifthenelse{\equal{#1}{\@noarg}}{}{\\{\textbf{\large #1}}}}\\[0.5ex]}}
\providecommand{\myauthor}[1]{\author{#1}}
\providecommand{\myinstitute}[1]{\date{\small\renewcommand{\and}{\\}#1}}
\providecommand{\institute}[1]{#1}
\providecommand{\myinstitute}[1]{\institute{}}
\providecommand{\mysection}[1]{\section{#1}}
\providecommand{\mysubsection}[1]{\subsection{#1}}
\providecommand{\myparagraph}[1]{\paragraph{#1}}
\renewcommand*{\backref}[1]{}
\renewcommand*{\backrefalt}[4]{\ifcase #1 \or \footnotesize (Cited on page~#2.) \else \footnotesize (Cited on pages~#2.) \fi}
\renewcommand{\H}{\mathcal{H}}
\renewcommand{\O}{H}
\newtheorem{theorem}{Theorem}[section]
\newtheorem{lemma}[theorem]{Lemma}
\newtheorem{definition}[theorem]{Def\/inition}
\newtheorem{construction}[theorem]{Construction}
\newenvironment{proof}[1][\@noarg]{ \begin{trivlist} \providecommand{\proofname}{Proof} \item[] \ifthenelse{\equal{#1}{\@noarg}}{\textit{\proofname{}. }}{\textit{\proofname{} (#1). }}}{ \end{trivlist}}
\renewenvironment{abstract}{\begin{quote}\begin{small}\textbf{\abstractname.}}{\vspace*{1.5ex}\end{small}\end{quote}}
\begin{document}

\title{The Fiat--Shamir Transformation in a Quantum World}

\date{\vspace*{-5ex}}

\myauthor{\"Ozg\"ur Dagdelen \and Marc Fischlin \and Tommaso Gagliardoni}

\myinstitute{\inst{} Technische Universit\"at Darmstadt, Germany\\   \email{www.cryptoplexity.de} \\
    \email{oezguer.dagdelen$\,@\,$cased.de}\qquad \email{marc.fischlin$\,@\,$gmail.com}\qquad  \email{tommaso$\,@\,$gagliardoni.net} }
    
\maketitle

\begin{abstract}
The Fiat-Shamir transformation is a famous technique to turn identification schemes into signature
schemes. The derived scheme is provably secure in the random-oracle model against classical adversaries.
Still, the technique has also been suggested to be used in connection with quantum-immune identification
schemes, in order to get quantum-immune signature schemes. However, a recent paper by Boneh
et al. (Asiacrypt 2011) has raised the issue that results in the random-oracle model may
not be immediately applicable to quantum adversaries, because such adversaries should be allowed
to query the random oracle in superposition. It has been unclear if the Fiat-Shamir technique
is still secure in this quantum oracle model (QROM).

Here, we discuss that giving proofs for the Fiat-Shamir transformation in the QROM is presumably hard. We show that
there cannot be black-box extractors, as long as the underlying quantum-immune identification
scheme is secure against active adversaries and the first message of the prover is independent
of its witness. Most schemes are of this type. We then discuss that for some schemes
one may be able to resurrect the Fiat-Shamir result in the QROM by modifying the underlying protocol first.
We discuss in particular
a version of the Lyubashevsky scheme which is provably secure in the QROM.
\end{abstract}

\mysection{Introduction}

The Fiat-Shamir transformation \cite{C:FiaSha86} is a well-known method to remove interaction
in three-move identification schemes between a prover and verifier, by letting
the verifier's challenge $\ch$ be determined via a hash function $H$ applied to the prover's first message $\com$.
Currently, the only generic, provably secure instantiation is by modeling the hash function $H$
as a random oracle \cite{CCS:BelRog93,JC:PoiSte00}. In general,
finding secure instantiations based on \emph{standard}
hash functions is hard for some schemes, as shown in \cite{FOCS:GolKal03,TCCpre:BDS13}. However,
these negative results usually rely on peculiar identification schemes, such that for specific schemes, especially
more practical ones, such instantiations may still be possible.

\myparagraph{The Quantum Random-Oracle model}
Recently, the Fiat-Shamir transformation has also been applied to schemes which are
advertised as being based on quantum-immune primitives, e.g.,
\cite{AC:Lyubashevsky09,EPRINT:BarMis10,AC:GorKatVai10,PROVSEC:CLRS10,SAC:CayVerAla10,C:SakShiHiw11,MGS11,PKC:Sakumoto12,CHES:GnuLyuPop12,EC:AFLT12,SCN:CamNevRuc12,EC:AJLTVW12}. Interestingly, the
proofs for such schemes still investigate classical adversaries only. It seems
unclear if (and how) one can transfer the proofs to the quantum case. Besides the problem
that the classical Fiat-Shamir proof \cite{JC:PoiSte00} relies on rewinding the adversary, which is often
considered to be critical for quantum adversaries (albeit not impossible \cite{STOC:Watrous06,EC:Unruh12}),
a bigger discomfort seems to lie in the usage of the random-oracle model in presence
of quantum adversaries.

As pointed out by Boneh et al.~\cite{AC:BDFLSZ11} the minimal requirement for random oracles
in the quantum world should be \emph{quantum access}. Since the random oracle is
eventually replaced by a standard hash function, a quantum adversary could evaluate this
hash function in superposition, while still ignoring any advanced attacks exploiting the structure
of the actual hash function. To reflect this in the random-oracle model, \cite{AC:BDFLSZ11}
argue that the quantum adversary should be also allowed to query the random oracle in
superposition. That is, the adversary should be able to query the oracle on a
state $\ket{\varphi}=\sum_{x} \alpha_x\ket{x}\ket{0}$ and in return would get
$\sum_{x}\alpha_x\ket{x}\ket{H(x)}$. This model is called the quantum random-oracle
model (QROM).

Boneh et al.~\cite{AC:BDFLSZ11} discuss some classical constructions for encryption and
signatures which remain secure in the QROM. They do not cover Fiat-Shamir signatures, though.
Subsequently, Boneh and Zhandry \cite{C:Zhandry12,FOCS:zhandry,EPRINT:BonZha12b} investigate
further primitives with quantum access, such as pseudorandom functions and MACs. Still, the
question about the security of the Fiat-Shamir transform in the QROM raised in~\cite{AC:BDFLSZ11} remained open.

\myparagraph{Fiat-Shamir Transform in the QROM}
Here, we give evidence that conducting security proofs for Fiat-Shamir transformed schemes and black-box 
adversaries is hard, thus yielding a negative result about the provable security of such schemes. More specifically, 
we use the meta-reduction technique to rule out the existence of quantum extractors 
with black-box access to a quantum adversary against the converted (classical) scheme. If such extractors would exist 
then the meta-reduction, together with the extractor, yields a quantum
algorithm which breaks the active security of the identification scheme. 
Our result covers \emph{any} identification scheme, as long as the
prover's initial commitment in the scheme is independent of the witness, 
and if the scheme itself is secure against active quantum attacks where a malicious verifier may first
interact with the genuine prover before trying to impersonate or, as we only demand here, to compute
a witness afterwards. Albeit not quantum-immune,
the classical schemes of Schnorr \cite{C:Schnorr89}, Guillou and Quisquater \cite{C:GuiQui88}, and Feige, 
Fiat and Shamir~\cite{JC:FeiFiaSha88} are conceivably of this type (see also~\cite{C:BelPal02}).
Quantum-immune candidates are, for instance,  \cite{C:MicVad03,PKC:Lyubashevsky08,AC:KawTanXag08,MGS11,C:SakShiHiw11,EC:AJLTVW12}.

Our negative result does not primarily rely on the rewinding problem for quantum adversaries;
our extractor may rewind the adversary (in a black-box way).
Instead, our result is rather based on the adversary's possibility to hide actual queries to the quantum random oracle
in a ``superposition cloud'', such that the extractor or simulator cannot elicit or implant necessary information
for such queries. In fact, our result reveals a technical subtlety in the QROM which previous 
works \cite{AC:BDFLSZ11,FOCS:zhandry,C:Zhandry12,EPRINT:BonZha12b}
have not addressed at all, or at most implicitly. It refers to the question how a simulator or extractor can answer superposition
queries $\sum_{x} \alpha_x\ket{x}\ket{0}$.

A possible option is to allow the simulator to reply with an arbitrary quantum state $\ket{\psi}=\sum_x \beta_x \ket{x}\ket{y_x}$, e.g.,
by swapping the state from its local registers to the ancilla bits for the answer in order to make this step unitary. This seems to
somehow generalize the classical situation where the simulator on input $x$ returns an arbitrary string $y$ for $H(x)$. Yet,
the main difference is that returning an arbitrary state $\ket{\psi}$ could also be used to eliminate some of the
input values $x$, i.e., by setting $\beta_x=0$. This is more than what the simulator is able to do in the classical setting,
where the adversary can uniquely identify the preimage~$x$ to the answer. In the extreme the simulator in the quantum case,
upon receiving a (quantum version of) a classical state $\ket{x}\ket{0}$, could simply
reply with an (arbitrary) quantum state $\ket{\psi}$. Since quantum states are in general indistinguishable, in contrast
to the classical case the adversary here would potentially continue its execution for inputs which it has not queried for.

In previous works \cite{AC:BDFLSZ11,C:Zhandry12,FOCS:zhandry,EPRINT:BonZha12b} the
simulator specifies a classical (possibly probabilistic) function $h$
which maps the adversary query $\sum_{x} \alpha_x\ket{x}\ket{0}$ to the reply
$\sum_{x} \alpha_x\ket{x}\ket{h(x)}$. Note that the function $h$ is not given explicitly to the adversary,
and that it can thus implement keyed functions like a pseudorandom function (as in \cite{AC:BDFLSZ11}).
This basically allows the simulator to freely assign values $h(x)$ to each string $x$, without being able to change
the input values. It also corresponds to the idea that, if the random oracle is eventually replaced by
an actual hash function, the quantum adversary can check that the hash function is classical,
even if the adversary does not aim to exploit any structural weaknesses (such that we still hide $h$ from the adversary).

We thus adopt the approach of letting the simulator determine the quantum answer via a classical probabilistic function $h$.
In fact, our impossibility hinges on this property but which we believe to be rather ``natural'' for the
aforementioned reasons. From a mere technical point of view it at least clearly
identifies possible venues to bypass our hardness result. In our case we allow the simulator to specify the (efficient)
function $h$ adaptively for each query, still covering techniques like programmability in the classical setting.
Albeit this is sometimes considered
to be a doubtful property \cite{AC:FLRSST10} this strengthens our impossibility result in this regard.

\myparagraph{Positive Results}
We conclude with some positive result. It remains open if one can ``rescue'' plain Fiat-Shamir for schemes
which are not actively secure, or to prove that alternative but still reasonably efficient approaches work.
However, we can show that the Fiat-Shamir technique in general \emph{does}
provide a secure signature scheme in the QROM if the protocol allows for \octxt. Roughly,
this means that the honest verifier generates the prover's first message $\com$ obliviously by sampling a random string
and sends $\com$ to the prover. In the random oracle transformed scheme the commitment is thus computed via the
random oracle, together with the challenge.
Such schemes are usually not actively secure against malicious verifiers. Nonetheless, we stress that in order to derive a secure signature
scheme via the Fiat-Shamir transform, the underlying identification scheme merely needs to provide passive security 
and honest-verifier zero-knowledge.

To make the above transformation work, we need that the prover is able to compute the response for commitments chosen 
obliviously to the prover. For some schemes this is indeed possible if the prover holds some trapdoor information.
Albeit not quantum-immune, it is instructive to look at the Guillou-Quisquater RSA-based proof of knowledge \cite{C:GuiQui88}
where the prover shows knowledge~of $w\in\ZZ_N^*$ with $w^e={y}\bmod{N}$ for $x=(e,N,y)$. For an oblivious commitment 
the prover would need to compute an $e$-th root for a given commitment $R\in\ZZ_N^*$. If the witness would contain the prime
factorization of $N$, instead of the $e$-th root of $y$, this would indeed be possible.
As a concrete example we discuss that we
can still devise a provably secure signature version of Lyubashevsky's identification scheme \cite{EC:Lyubashevsky12} via our method.
Before, Lyubashevsky only showed security in the classical random-oracle model, despite using an allegedly quantum-immune
primitive.

Our results are summarized in Figure~\ref{fig:results}. Actively secure identification schemes with witness-independent
commitments (lower right area) are hard to prove secure in the quantum random oracle model. Schemes with oblivious and 
therefore witness-independent commitments can be proven secure (upper right area). Schemes outside of this area may be patched according to our idea
exemplified for Lyubashevsky's scheme to turn them into secure signature schemes in the QROM. For any other identification scheme
the question remains open.

\begin{figure}[t]
\begin{center}
\includegraphics[width=0.7\textwidth]{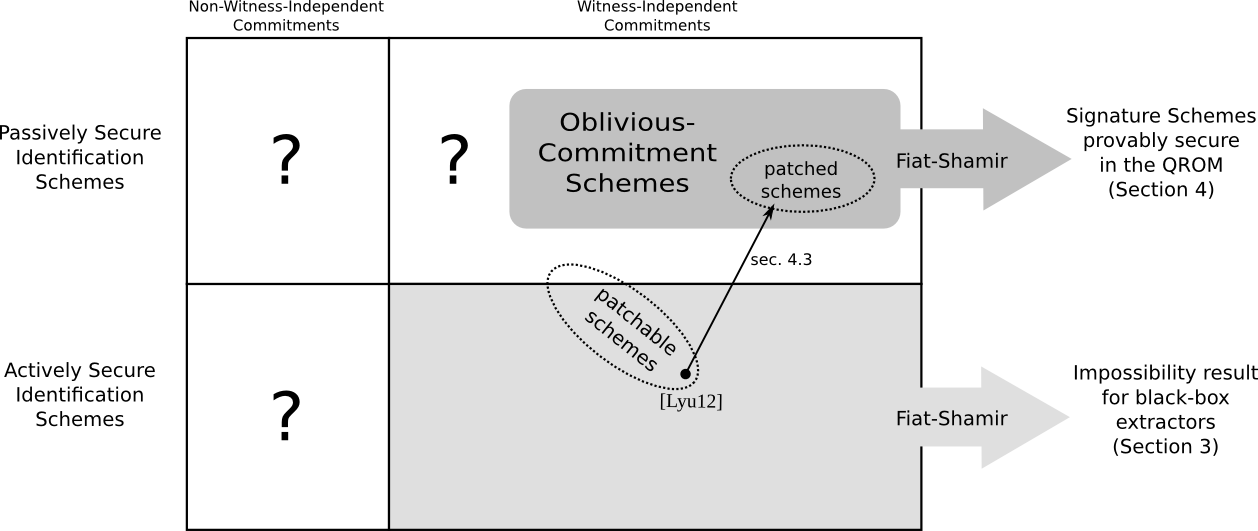}
\end{center}
\caption{Possibility and impossibility results for the Fiat-Shamir transform of identification schemes in the QROM.}
\label{fig:results}
\end{figure}

\myparagraph{Related work} Since the introduction of the quantum-accessible random-oracle model~\cite{AC:BDFLSZ11}, 
several works propose cryptographic primitives or revisit their security against quantum algorithms in this stronger 
model~\cite{FOCS:zhandry,C:Zhandry12,EPRINT:BonZha12b}. In~\cite{EPRINT:DFNS11}, Damg\aa{}rd et al.~look at the security 
of cryptographic protocols where the underlying primitives or even parties can be queried by an adversary in a superposition. We here
investigate the scenario in which the quantum adversary can only interact classically with the classical honest parties, except
for the locally evaluable random oracle.

In a concurrent and independent work, Boneh and Zhandry~\cite{EPRINT:BonZha13} analyze the security of signature schemes under
quantum chosen-message attacks, i.e., the adversary in the unforgeability notion of the signature scheme may query the signing
oracle in superposition and, eventually, in the quantum random oracle model. Our negative result carries over to the 
quantum chosen-message attack model as well, since our impossibility holds even allowing only classical queries to the 
signing oracle. Moreover, while the authors of \cite{EPRINT:BonZha13} show how to obtain signature schemes secure in the 
quantum-accessible signing oracle model, starting with schemes
secure in the classical sense, we focus on signature schemes and proofs of knowledge derived from identification schemes via
the Fiat-Shamir paradigm.

\mysection{Preliminaries}

We first describe (to the level we require it) quantum computations and then recall
the quantum random-oracle model of Boneh et al.~\cite{AC:BDFLSZ11}. We also introduce
the notion of $\Sigma$-protocols to which the Fiat-Shamir transformation applies.
In the end of this section, we recall the definition of signature schemes and its security.

\mysubsection{Quantum Computations in the QROM}

We first briefly recall facts about quantum computations and set some notation; for more details, we refer to \cite{ChaNie00}.
Our description follows \cite{AC:BDFLSZ11} closely.

\myparagraph{Quantum Systems}
A quantum system $A$ is associated to a complex Hilbert space $\hilbert_A$ of finite dimension and with an inner
product $\Angle{\cdot | \cdot}$. The state of the system is given by a (class of) normalized vector $\ket{\varphi}\in\hilbert_A$ with
Euclidean norm
$\|\ket{\varphi}\|=\sqrt{\Angle{\varphi | \varphi}}=1$.

The joint or composite quantum state of two quantum systems $A$ and $B$ over spaces $\hilbert_A$ and $\hilbert_B$, respectively,
is given through the tensor product $\hilbert_A\otimes\hilbert_B$.  The product state of $\ket{\varphi_A}\in\hilbert_A$ and 
$\ket{\varphi_B}\in\hilbert_B$ is denoted by $\ket{\varphi_A}\otimes\ket{\varphi_B}$. We sometimes simply write
$\ket{\varphi_A}\ket{\varphi_B}$ or $\ket{\varphi_A,\varphi_B}$.  An $n$-qubit system is associated in the joint quantum 
system of $n$ two-dimensional Hilbert spaces.
The standard orthonormal computational basis $\ket{x}$ for such a system is given by $\ket{x}=\ket{x_1}\otimes\dots\otimes\ket{x_n}$ 
for $x=x_1\dots x_n\in\bin^n$. We often assume that any (classical) bit string $x$ is encoded into a quantum state as $\ket{x}$,
and vice versa we sometimes view such a state simply as a classical state.
Any pure $n$-qubit state $\ket{\varphi}$ can be expressed as a superposition in the computational basis as 
$\ket{\varphi} = \sum_{x \in \bin^n} \alpha_x \ket{x}$ where $\alpha_x$ are complex amplitudes 
obeying $\sum_{x \in \bin^n} |\alpha_x|^2 =1$.

\myparagraph{Quantum Computations}
Evolutions of quantum systems are described by unitary transformations with $\identity_A$ being the identity transformation on
register $A$. For a composite quantum system over $\hilbert_A\otimes\hilbert_B$ and a transformation $U_A$ acting only on $\hilbert_A$,
it is understood that $U_A\ket{\varphi_A}\ket{\varphi_B}$ is a simplification of $(U_A\otimes\identity_B)\ket{\varphi_A}\ket{\varphi_B}$.
Note that any unitary operation and, thus, any quantum operation, is invertible.

Information can be extracted from a quantum state $\ket{\varphi}$ by performing a positive-operator valued measurement (POVM) $M=\{M_i\}_i$ 
with positive semi-definite measurement operators $M_i$ that sum to the identity $\sum_i M_i = \identity$. Outcome $i$ is obtained with 
probability $p_i = \bra{\varphi} M_i\ket{\varphi}$.
A special case are projective measurements such as the measurement in the computational basis of the state 
$\ket{\varphi}=\sum_x \alpha_x \ket{x}$ which yields outcome $x$ with probability $|\alpha_x|^2$. Measurements can refer to a subset of 
quantum registers and are in general not invertible.

We model a quantum algorithm $\qadv$ with access to oracles $O_1,O_2,\dots$ by a sequence of unitary transformations
\[ U_1, O_1, U_2,\dots, O_{T-1},U_T \] over $m=\text{poly}(n)$ qubits. Here, oracle function $O_i:\bin^a\to\bin^b$ maps the final $a+b$
qubits  from basis state $\ket{x}\ket{y}$ to $\ket{x}\ket{y\xor O_i(x)}$ for $x\in\bin^a$ and $y \in \bin^b$. This mapping is inverse to
itself.
We can let the oracles share (secret) state by reserving some qubits for the $O_i$'s only, on which the $U_j$'s cannot operate.
Note that the algorithm $\qadv$ may also receive some (quantum) input $\ket{\psi}$. The adversary may also perform measurements.
We sometimes
write $\qadv^{\ket{O_1(\cdot)},\ket{O_2(\cdot)},\dots}(\ket{\psi})$ for the output.

To introduce asymptotics we assume that $\qadv$ is actually a sequence of such transformation sequences, indexed by parameter $n$, 
and that each transformation sequence is composed out of quantum systems for input, output, oracle calls, and work space (of sufficiently 
many qubits). To measure polynomial running time, we assume that each $U_i$ is approximated (to sufficient precision) by members of a set 
of universal gates (say, Hadamard, phase, CNOT and $\pi/8$; for sake of
concreteness \cite{ChaNie00}), where at most polynomially many gates are used. Furthermore, $T=T(n)$ is assumed to be polynomial, too.

\myparagraph{Quantum Random Oracles}
We can now define the quantum random-oracle model by picking a random function $H$ for a given domain and range, and letting
(a subset of) the oracles $O_i$ evaluate $H$ on the input in superposition, namely those $O_i$'s which correspond to hash
oracle queries. In this case the quantum adversary can evaluate the hash function in parallel for many inputs
by querying the oracle about $\sum_x \alpha_x \ket{x}$ and obtaining $\sum_x \alpha_x \ket{H(x)}$, appropriately encoded
as described above.
Note that the output distribution $\qadv^{\ket{O_1(\cdot)},\ket{O_2(\cdot)},\dots}(\ket{\psi})$ now refers to the
$\qadv$'s measurements and the choice of $H$ (and the random choices for the other oracles, if existing).

\mysubsection{Classical Interactive Proofs of Knowledge}

Here, we review the basic definition of $\Sigma$-protocols and show the classical Fiat-Shamir transformation which 
converts the interactive $\Sigma$-protocols into non-interactive proof of knowledge (PoK) protocols (in the random-oracle model).

Let $\cL\in\npol$ be a language with a (polynomially computable) relation $\cR$, i.e.,\ $x\in\cL$ if and only if there exists some
$w\in\bin^*$ such that $\cR(x,w) = 1$ and $\abs{w} = poly(\abs{x})$ for any $x$. As usual, $w$ is called a witness for 
$x\in\cL$ (and $x$ is sometimes called a ``theorem'' or statement). We sometimes use the notation $\cR_\secpar$ to denote the 
set of pairs $(x,w)$ in $\cR$ of some complexity related to the security parameter,
e.g., if $|x|=\secpar$.

\myparagraph{$\Sigma$-Protocols}
The well-known class of $\Sigma$-protocols between a prover $\prov$ and a verifier $\ver$ allows $\prov$ to convince $\ver$ that 
it knows a witness $w$ for a public theorem $x\in\cL$, without giving $\ver$ non-trivially computable information beyond this fact.
Informally, a $\Sigma$-protocol consists of three messages $(\com,\ch,\resp)$ where the first message $\com$ is sent by
$\prov$ and the challenge $\ch$ is sampled uniformly from a challenge space by the verifier. We write
$(\com,\ch,\resp)\exec\Angle{\prov(x,w),\ver(x)}$ for the randomized output of an interaction between $\prov$ and $\ver$.
We denote individual messages of the (stateful) prover in such an execution by
$\com\exec\prov(x,w)$ and $\resp\exec\prov(x,w,\com,\ch)$, respectively. Analogously, we denote the verifier's steps
by $\ch\exec\ver(x,\com)$ and $d\exec\ver(x,\com,\ch,\resp)$ for the challenge step and the final decision.

\begin{definition}[$\Sigma$-Protocol]
A $\Sigma$-protocol $(\prov,\ver)$ for an $\npol$-relation $\cR$ satisfies the following properties:
\begin{description}

\myitem[Completeness]
For any security parameter \secpar, any $(x,w)\in\cR_\secpar$, any $(\com,\ch,\resp)\leftarrow\Angle{\prov(x,w),\ver(x)}$ it holds $\ver(x,\com,\ch,\resp)=1$.

\myitem[Public-Coin] For any security parameter \secpar, any $(x,w)\in\cR_\secpar$, and any $\com\leftarrow\prov(x,w)$, the challenge 
$\ch\leftarrow\ver(x,\com)$ is uniform on $\{0,1\}^{\ell(\secpar)}$ where $\ell$ is some polynomial function.

\myitem[Special Soundness] Given $(\com,\ch,\resp)$ and $(\com,\ch',\resp')$ for $x\in\cL$ (with $\ch\neq \ch'$) 
where \linebreak  $\ver(x,\com,\ch,\resp)= \ver(x,\com,\ch',\resp')=1$, there exists a PPT algorithm $\extractor$ (the extractor) 
which for any such input
outputs a witness $w\leftarrow \extractor(x,\com,\ch,\resp,\ch',\resp')$ for $x$ satisfying $\cR(x,w) = 1$.

\myitem[Honest-Verifier Zero-Knowledge (HVZK)] There exists a PPT algorithm $\Sim$ (the zero-knowledge simulator) which, on 
input $x\in\cL$, outputs a transcript $(\com,\ch,\resp)$ that is computationally indistinguishable from a valid transcript derived 
in a $\prov$-$\ver$ interaction. That is, for any polynomial-time quantum algorithm $\ddv=(\ddv_0,\ddv_1)$ the following distributions 
are indistinguishable:

\begin{itemize}
  \item Let $(x,w,\mathsf{state})\leftarrow \ddv_0(1^\secpar)$. If $\cR(x,w)=1$, then $(\com,\ch,\resp)\leftarrow \langle\prov(x,w),\ver(x)\rangle$; \\ else, $(\com,\ch,\resp)\leftarrow \bot$. Output $\ddv_1(\com,\ch,\resp,\mathsf{state})$.
  \item Let $(x,w,\mathsf{state})\leftarrow \ddv_0(1^\secpar)$. If $\cR(x,w)=1$, then $(\com,\ch,\resp)\leftarrow \Sim(x)$; \\ else, $(\com,\ch,\resp)\leftarrow \bot$. Output $\ddv_1(\com,\ch,\resp,\mathsf{state})$.
\end{itemize}
Here, $\mathsf{state}$ can be a quantum state.

\end{description}
\end{definition}

\myparagraph{Fiat-Shamir (FS) Transformation}
The Fiat-Shamir transformation of a $\Sigma$-protocol $(\prov,\ver)$ is the same protocol but
where the computation of $\ch$ is done as $\ch\exec H(x,\com)$ instead of $\exec\ver(x,\com)$. Here, $H$ is a public
hash function which is usually modeled as a random oracle, in which case we speak of the Fiat-Shamir transformation of $(\prov,\ver)$ 
in the random-oracle model. Note that we include $x$ in the hash computation, but all of our results remain valid if $x$ is omitted 
from the input. If applying the FS transformation to a (passively-secure) identification protocol one obtains a signature scheme, if 
the hash computation also includes the message $m$ to be signed.

\mysubsection{Quantum Extractors and the FS Transform}

\myparagraph{Quantum Extractors in the QROM}
Next, we describe a black-box quantum extractor. Roughly, this extractor should be able to output a witness $w$
for a statement $x$ given black-box access to the adversarial prover. There are different possibilities
to define this notion, e.g., see the discussion in \cite{EC:Unruh12}. Here, we take a simple approach
which is geared towards the application of the FS transform to build secure signature schemes.
Namely, we assume that, if a quantum adversary $\qadv$ on input $x$
and with access to a quantum-accessible random oracle
has a non-negligible probability of outputting a valid proof $(\com,\ch,\resp)$, then there is an extractor $\qext$
which on input $x$ and with black-box access to $\qadv$ outputs a valid witness with non-negligible probability,
too.

We need to specify how the extractor simulates the quantum-accessible random oracle.
This time we view the extractor $\qext$ as a sequence of unitary transformations $U_1,U_2,U_3,\dots$,
interleaved with interactions with the adversary $\qadv$, now represented
as the sequence of (stateful) oracles $O_1,O_2,\dots$ to which $\qext$ has access to. Here each $O_i$ corresponds to the local
computations of the adversary until the ``next interaction with the outside world''. In our case this will be
basically the hash queries $\ket{\varphi}$ to the quantum-accessible random oracle. We stipulate $\qext$ to write
the (circuit description of a) classical function $h$ with the expected input/output length, and which we assume for the moment
to be deterministic, in some register before making the next call to an oracle. Before this call is
then actually made, the hash function $h$ is first applied to the
quantum state $\ket{\varphi}=\sum_x \alpha_x\ket{x}\ket{0}$ of the previous oracle in the sense that the next oracle is called
with $\sum_x \alpha_x\ket{x}\ket{h(x)}$. Note that we can enforce this behavior formally  by restricting $\qext$'s steps $U_1,U_2,\dots$
to be of this described form above.

At some point the adversary will return some classical proof $(\com,\ch,\resp)$ for $x$.
To allow the extractor to rewind the adversary we assume that
the extractor can  invoke another run
with the adversary (for the same randomness, or possibly fresh randomness, appropriately encoded in
the behavior of oracles). If the reduction asks to keep the same randomness then since the adversary
only receives classical input $x$, this corresponds to a reset to
the initial state. 
Since we do not consider adversaries with auxiliary quantum input, but only with classical input,
such resets are admissible.

For our negative result we assume that the adversary does not perform any measurements before eventually creating the final
output, whereas our positive result also works if the adversary measures in between.
This is not a restriction, since in the meta-reduction technique we are allowed to choose a specific adversary,
 without having to consider more general cases.
Note that the intrinsic ``quantum randomness'' of the adversary is fresh for each rewound run
but, for our negative result, since measurements of the adversary are postponed till the end, the extractor can
re-create the same quantum state as before at every interaction point.
Also note that the extractor can measure any quantum query of the adversary to the random oracle
but then cannot continue the simulation of this instance (unless the adversary chose a classical query in the first place).
The latter reflects the fact that the extractor cannot change the quantum input state for answering the adversary's queries to
the random oracle.

In summary, the black-box extractor can: (a) run several instances of the adversary
from the start for the same or fresh classical randomness, possibly reaching the same quantum state as in previous
executions when the adversary interacts with external oracles, (b) for each query to the QRO either measure and abort this execution,
or provide a hash function $h$, and (c) observe the adversary's final output. The black-box extractor cannot, for instance, interfere
with the adversary's program and postpone or perform additional measurements, nor rewind the adversary between interactions with the outside
world, nor tamper with the internal state of the adversary.
As a consequence, the extractor cannot observe the adversary's queries, 
but we still allow the extractor to access queries if these are classical. In particular, the extractor may choose $h$ adaptively
but not based on quantum queries (only on classical queries).
We motivate this model with the observation that, in meaningful scenarios, the extractor should only be able to give a classical description 
of $h$, which is then ``quantum-implemented'' by the adversary $\qadv$ through a ``quantum programmable oracle gate''; the gate itself will 
be part of the adversary's circuit, and hence will be outside the extractor's influence. Purification of the adversary is also not allowed, 
since this would discard those adversaries which perform measurements, and would hence hinder the notion of black-box access.

For an interesting security notion computing a witness from $x$ only should be infeasible, even for a quantum adversary.
To this end we assume that there is an efficient instance generator $\igen$ which on input~$1^\secpar$ outputs a pair
$(x,w)\in\cR$ such that any polynomial-time quantum algorithm on (classical) input~$x$ returns some classical string
$w'$ with $(x,w')\in\cR$, is negligible (over the random choices of $\igen$ and the quantum algorithm).
We say $\igen$ is a \emph{hard instance generator for relation~$\cR$}.

\begin{definition}[Black-Box Extractor for $\Sigma$-Protocol in the QROM]\label{def:bbext}
Let $(\prov,\ver)$  be a $\Sigma$-pro\-tocol for an $\npol$-relation $\cR$ with hard instance generator $\igen$.
Then a black-box extractor $\qext$ is a polynomial-time quantum algorithm (as above) such that
for any quantum adversary $\qadv$ with quantum access to oracle $H$, it holds that, if
\[ \prob{\ver^H(x,\com,\ch,\resp)=1\text{ for }(x,w)\exec\igen(1^\secpar);(\com,\ch,\resp)\exec\qadv^{\ket{H}}(x)}\not\approx 0 \]
is not negligible, then
\[ \prob{(x,w')\in\cR\text{ for }(x,w)\exec\igen(1^\secpar);w'\exec\qext^{\qadv}(x)}\not\approx 0 \]
is also not negligible.
\end{definition}
For our negative (and our positive) results we look at special cases of black-box extractors, denoted

\emph{input-respecting} extractors. This

means that the extractor only runs the adversary on the given input $x$.
All known extractors are of this kind, and in general it is unclear how to take advantage of executions for different $x'$.

\myparagraph{On Probabilistic Hash Functions}
We note that we could also allow the extractor to output a description of a \emph{probabilistic} hash function $h$
to answer each random oracle call. This means that, when evaluated for some string $x$, the reply is $y=h(x;r)$
for some randomness $r$ (which is outside of the extractor's control). In this sense a query
$\ket{\varphi}=\sum_x\alpha_x\ket{x}\ket{0}$ in superposition returns $\ket{\varphi}=\sum_x\alpha_x\ket{x}\ket{h(x;r_x)}$
for independently chosen $r_x$ for each $x$.

We can reduce the case of probabilistic functions $h$ to deterministic ones, if we assume quantum-accessible
pseudorandom functions \cite{AC:BDFLSZ11}. These functions are indistinguishable from random functions for quantum adversaries,
even if  queried in superposition. In our setting, in the deterministic case the extractor incorporates
the description of the pseudorandom function for a randomly chosen key $\kappa$ into the description of the
deterministic hash function, $h'(x)=h(x;\prf_\kappa(x))$. Since the hash function description is not presented to the
adversary, using such derandomized hash functions cannot decrease the extractor's success probability
significantly. This argument can be carried out formally by a reduction to the quantum-accessible pseudorandom
function, i.e., by forwarding each query $\ket{\varphi}$ of the QROM adversary to the random or pseudorandom function oracle,
and evaluating $h$ as before on $x$ and the oracle's reply. Using a general technique in \cite{C:Zhandry12} we can even
replace the assumption about the pseudorandom function and use a $q$-wise independent function instead.

\mysubsection{Signature Schemes and Their Security}\label{sec:sig}

Here, we recall the definition of signature schemes and their security.

\begin{definition}[Signature Scheme]\label{def:sig}
A (digital) signature scheme (in the random-oracle model) consists of three efficient algorithms (\skgen, \sign, \svf) defined as follows.
\begin{description}
	\myitem[Key Generation] On input the security parameter $1^\secpar$, the probabilistic algorithm $\skgen^H$ with oracle access to
	 $H$ outputs a key pair
	$(\sk,\pk)$ where $\sk$ (resp.~$\pk$) denotes the signing key (resp.~public verification key).	
	\myitem[Signing] On input a signing key $\sk$ and a message $m$, the probabilistic algorithm $\sign^H$ outputs a signature $\sigma$.
	\myitem[Verification] On input the verification key $\pk$, a message $m$, and a signature $\sigma$, the deterministic
	algorithm $\svf^H$ outputs either $1$ ($=$ valid) or $0$ ($=$ invalid).
\end{description}
We require correctness of the verification, i.e., the verifier will always accept genuine signatures. More formally, for any security 
parameter \secpar, any $(\sk,\pk)\leftarrow\skgen(1^\secpar)$, for any message $m$, any signature $\sigma\leftarrow\sign(\sk,m)$, we have
$\svf(\pk,m,\sigma)=1$.
\end{definition}
From a signature scheme we require that no outsider should be able to forge signatures. Formally, this property is called
unforgeability against adaptively chosen-message attacks (unf-cma) and is defined as follows.
\begin{definition}[UNF-CMA Security]\label{def:sig-unf}
A (digital) signature scheme $\sigsch=(\skgen,\sig,\svf)$ in the random-oracle model is $(t,Q,\varepsilon)$-unforgeable against adaptively 
chosen-message attacks with $Q=(q_H,q_S)$ if for any algorithm \adv\ with runtime $t$ and making at most $q_H$ (resp. $q_S$) queries to 
the random oracle (resp. its signing oracle), the probability that the following experiment returns $1$ is at most $\varepsilon$.
\begin{center}
	\begin{tabbing}
123\=123\=123\=123\=123\=\kill
\> pick random function $H$\\
\> $(\sk,\pk)\rand \skgen^H(1^\secpar)$ \\
\> $(m^*,\sigma^*)\rand \adv^{H,\sign^H(\sk,\cdot)}(\pk)$\\
\> Return $1$ iff $\svf^H(\pk,m^*,\sigma^*)=1$ and $m^*\notin \mathsf{M}$.\\
\> \> Here, $\mathsf{M}$ is the set of message queried to $\sign^H(\sk,\cdot)$.
\end{tabbing}
\end{center}
The probability is taken over all coin tosses of $\skgen$, $\sign$, and \adv, and the choice of $H$.
\end{definition}
We call a signature scheme \emph{existentially unforgeable under chosen message attacks} in the (quantum) random-oracle model if for any 
PPT (quantum) algorithm making at most polynomial number of (superposition) queries to the (quantum) random oracle and classical queries 
to the signature scheme, the probability for the above experiment is negligible in the security parameter.

\mysection{Impossibility Result for Quantum-Fiat-Shamir}\label{sec:imp}

We use meta-reductions techniques to show that, if the Fiat-Shamir transformation applied to the
identification protocol would support a knowledge extractor, then we would obtain a contradiction to
the active security. That is, we first build an all-powerful quantum adversary $\qadv$ successfully generating accepted
proofs. Coming up with such an adversary is necessary to ensure that a black-box extractor $\qext$ exists in the first
place; Definition~\ref{def:bbext} only requires $\qext$ to succeed \emph{if} there is some successful adversary $\qadv$.
The adversary $\qadv$ uses its unbounded power to find a witness $w$ to its input $x$, and then uses the quantum access
to the random oracle model to ``hide'' its actual query in a superposition. The former ensures that that our adversary
is trivially able to construct a valid proof by emulating the prover for $w$, the latter prevents the extractor to
apply the rewinding techniques of  Pointcheval and Stern \cite{JC:PoiSte00} in the classical setting.
Once we have designed our adversary $\qadv$ and ensured the existence of $\qext$, 
we wrap $\qext$ into a reduction $\qmeta$ which takes the role of $\qadv$ and breaks active security.
The (quantum) meta-reduction now plays against the honest prover of the identification scheme ``on the outside'',
using the extractor ``on the inside''. In this inner interaction $\qmeta$ needs to emulate our all-powerful adversary
$\qadv$ towards the extractor, but this needs to be done efficiently in order to make sure that the meta-reduction
(with its inner interactions) is efficient.

In the argument below we assume that the extractor is 
input-respecting (i.e., forwards $x$ faithfully to the adversary).
In this case we can easily derandomize the adversary (with respect to classical randomness)
by ``hardwiring'' a key of a random function into it, which it initially applies to its input
$x$ to recover the same classical randomness for each run. Since the extractor has to work for all
adversaries, it in particular needs to succeed for those where we pick the function randomly but fix it from thereon.

\mysubsection{Assessment}

Before we dive into the technical details of our result let us re-assess the strength and weaknesses of our impossibility result:

\begin{enumerate}
\item The extractor has to choose a classical hash function $h$ for answering QRO queries. While this may be considered a restriction
 in general interactive quantum proofs, it seems to be inevitable in the QROM; it is rather a consequence of the approach where a quantum
 adversary mounts attacks in a classical setting. After all, both the honest parties as well as the adversary expect a classical hash
 function. The adversary is able to check this property easily, even if it treats the hash function
 otherwise as a black box (and may thus not be able to spot that the hash function uses (pseudo)randomness).
 We remark again that
 this approach also complies with previous efforts \cite{AC:BDFLSZ11,C:Zhandry12,FOCS:zhandry,EPRINT:BonZha12b} and our
 positive result here to answer such hash queries.

\item The extractor \emph{can} rewind the quantum adversary to any point before the final measurement.
 Recall that for our impossibility result we assume, to the advantage of the extractor, that the adversary
  does not perform any measurement until the very end.
 Since the extractor can re-run the adversary from scratch for the same classical randomness, and the ``no-cloning restriction''
  does not apply to our adversary
  with classical input, the extractor can therefore easily put the adversary in the same (quantum) state as in a previous execution,
  up to  the final measurement. However, because we consider \emph{black-box} extractors, the extractor can only
  influence
  the adversary's behavior via the answers it provides to $\qadv$'s external communication. In this sense,
  the extractor may always rewind the adversary to such communication points. We also allow the extractor to
  measure and abort at such communication points.

\item The extraction strategy by Pointcheval and Stern \cite{JC:PoiSte00} in the purely classical case \emph{can} be cast in our black-box
 extractor framework. For this the extractor would run the adversary for the same classical randomness twice, providing a lazy-sampling
 based hash function description, with different replies in the $i$-th answers in the two runs. The extractor then extracts the witness from
  two  valid signatures. This shows that a different approach than in the classical setting is necessary for extractors in the QROM.
\end{enumerate}

\mysubsection{Prerequisites}

\myparagraph{Witness-Independent Commitments}
We first identify a special subclass of $\Sigma$-protocols which our result relies upon:

\begin{definition}[$\Sigma$-protocols with \wictxt]
A $\Sigma$-protocol has \emph{wit\-ness-independent commitments} if the prover's commitment $\com$ does not depend on the witness $w$.
That is, we assume that there is a PPT algorithm~$\Com$ which, on input $x$ and some randomness~$r$, produces the same
distribution as the prover's first message for input $(x,w)$.
\end{definition}
Examples of such $\Sigma$-protocols are the well known graph-isomorphism proof \cite{C:GolMicWig86},
the Schnorr proof of knowledge \cite{JC:Schnorr91}, or the recent protocol for lattices used in an anonymous credential
system~\cite{SCN:CamNevRuc12}.
A typical example of non-\wdctxt{} $\Sigma$-protocol is the graph $3$-coloring ZKPoK scheme \cite{C:GolMicWig86}
where the prover commits to a random permutation of the coloring.

We note that perfectly hiding commitments do not suffice for our negative result. We need to be able to generate
(the superposition of) all commitments without knowledge of the witness.

\myparagraph{Weak Security Against Active Quantum Adversaries}
We next describe the underlying security of (non-transformed) $\Sigma$-protocols against a weak form of active attacks where the
adversary may use quantum power but needs to eventually compute a witness. That is, we let $\qadv^{\prov(x,w)}(x)$ be a quantum
adversary which can interact classically with several prover instances. The prover
instances can be invoked in sequential order, each time the prover starts by computing a fresh commitment $\com\leftarrow\prov(x,w)$,
and upon receiving a challenge $\ch\in\bin^\ell$ it computes the response~$\resp$. Only if it has returned this response $\prov$
can be invoked on a new session again. We say that the adversary \emph{succeeds in an active attack}
if it eventually returns some $w'$ such that
$(x,w')\in\cR$.

For an interesting security notion computing a witness from $x$ only should be infeasible, even for a quantum adversary.
To this end we assume that there is an efficient instance generator $\igen$ which on input~$1^\secpar$ outputs a pair
$(x,w)\in\cR$ such that any polynomial-time quantum algorithm on (classical) input~$x$ returns some classical string
$w'$ with $(x,w')\in\cR$, is negligible (over the random choices of $\igen$ and the quantum algorithm).
We say $\igen$ is a \emph{hard instance generator for relation~$\cR$}.

\begin{definition}[Weakly Secure $\Sigma$-Protocol Against Active Quantum Adversaries]
A $\Sigma$-protocol $(\prov,\ver)$ for an $\npol$-relation $\cR$ with hard instance generator $\igen$
is weakly secure against active quantum adversaries if for any polynomial-time quantum adversaries $\qadv$
the probability that $\qadv^{\prov(x,w)}(x)$ succeeds in an active attack for $(x,w)\exec\igen(1^\secpar)$ is
negligible (as a function of $\secpar$).
\end{definition}
We call this property weak security because it demands the adversary to compute a witness $w'$, instead
of passing only an impersonation attempt. If the adversary finds such a witness, then completeness
of the scheme implies that it can successfully impersonate. In this sense we put more restrictions on the adversary
and, thus, weaken the security guarantees.

\begin{center}
\begin{figure}[t]
\begin{center}
\includegraphics[width=5in]{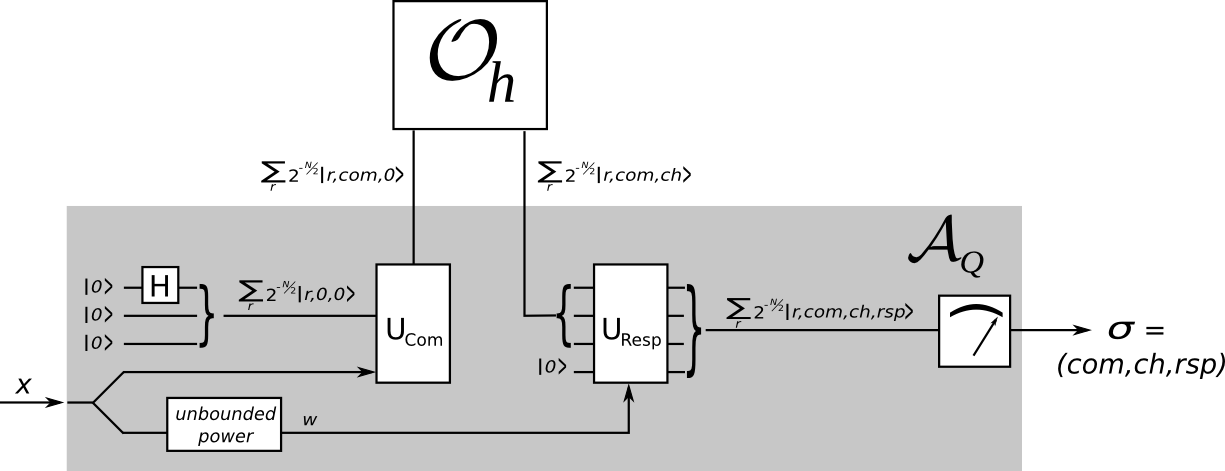}
\end{center}
\caption{The canonical adversary}
\label{fig:adv}
\end{figure}
\end{center}

\mysubsection{The Adversary and the Meta-Reduction}

\myparagraph{Adversary}
Our (unbounded) adversary works roughly as follows (see Figure~\ref{fig:adv}). It receives as input a value $x$ and first uses
its unbounded computational power to compute a random witness $w$ (according to uniform distributions of
coin tosses $\omega$ subject to $\igen(1^n;\omega)=(x,w)$, but where $\omega$ is a random function of~$x$).
Then it prepares all possible random strings $r\in\bin^{N}$ (where $N = \poly(n)$)
for the prover's algorithm in superposition. It then evaluates (a unitary version of) the classical function $\Com ()$ for computing the prover's commitment
on this superposition (and on $x$) to get a superposition of all $\ket{r}\ket{\com_{x,r}}$.
It evaluates the random oracle $H$ on the $\com$-part, i.e., to be precise, the hash values are stored
in ancilla bits such that the result is a superposition of states $\ket{r}\ket{\com_{x,r}}\ket{H(x,\com_{x,r})}$.
The adversary computes, in superposition, responses for all values and finally measures
in the computational basis, yielding a sample $(r,\com_{x,r},\ch, \resp_{x,w,r})$ for $\ch=H(x,\com_{x,r})$
where $r$ is uniform over all random strings; 
it outputs the transcript $(\com,\ch,\resp)$.

\begin{center}
\begin{figure}[t]
\centering
\includegraphics[width=5in]{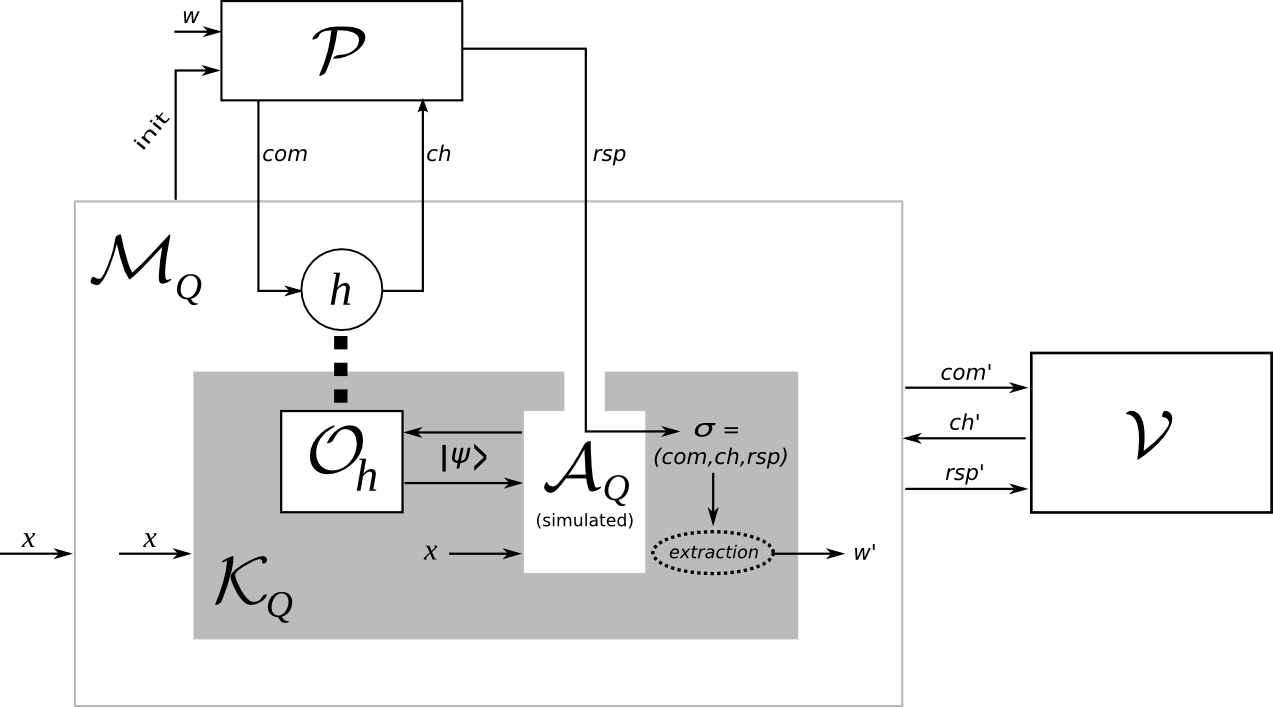}
\caption{An overview of our meta-reduction}
\label{fig:red}
\end{figure}
\end{center}

\vspace{-3.5em}

\myparagraph{The Meta-Reduction}
We illustrate the meta-reduction in Figure~\ref{fig:red}. Assume that there exists a (quantum) black-box 
extractor $\qext$ which on input~$x$, sampled according to $\igen$,
and which is also given to $\qadv$, is able to extract a witness $w$ to $x$ by running several resetting executions
of $\qadv$, each time answering $\qadv$'s (only) random oracle query $\ket{\varphi}$ by supplying a classical, possibly probabilistic
function $h$. We then build a (quantum) meta-reduction $\qmeta$ which breaks the weak security of the identification scheme
in an active attack when communicating with the classical prover.

The quantum meta-reduction $\qmeta$ receives as input the public statement $x$. It forwards it
to $\qext$ and waits until $\qext$ invokes $\qadv(x)$, which is now simulated by $\qmeta$.
For each (reset) execution the meta-reduction skips the step where the adversary would compute
the witness, and instead immediately computes the same superposition query $\ket{r}\ket{\com_{x,r}}$
as $\qadv$ and outputs it to $\qext$. When $\qext$
creates (a description of) the possibly probabilistic function $h$ we let $\qmeta$ initiate an interaction with the prover
to receive a classical sample $\com_{x,r}$, on which it evaluates $h$ to get a challenge $\ch$.
Note that $\qmeta$ in principle does not need a description of $h$ for this, but only a possibility to
compute $h$ once. The meta-reduction forwards the challenge to the prover to get a response $\resp$. It 
outputs $(\com,\ch,\resp)$ to the reduction. If the reduction eventually outputs a potential witness $w'$ then
$\qmeta$ uses this value $w'$ to break the weak security.

\mysubsection{Analysis}

For the analysis note that the extractor's perspective in each execution is identical in both cases, when interacting
with the actual adversary $\qadv$, or when interacting with the meta-reduction $\qmeta$. The reason is that
the commitments are witness-independent such that the adversary (using its computational power to first compute a witness)
and the meta-reduction computing the commitments without knowledge of a witness, create the same distribution on the
query to the random oracle.
Since up to this point the extractor's view is identical in both runs, its distribution on $h$ is also the same
in both cases. But then the quantum adversary internally computes, in superposition over all 
possible random strings $r$, the challenge $\ch\exec h(x,\com_{x,r})$ and the response $\resp_{x,w,r}$ for $x,w$, and $\ch$.
It then measures $r$ in the computational basis, such that the state collapses to a classical tuple $(\com_{x,r},\ch,\resp_{x,w,r})$
over uniformly distributed $r$.
Analogously, the meta-reduction, upon receiving $h$ (with the same distribution as in $\qadv$'s attack),
receives from the prover a commitment $\com_{x,r}$ for a uniformly distributed $r$. It then computes $\ch\exec h(x,\com_{x,r})$
and obtains $\resp_{x,w,r}$ from the prover, which is determined by $x,w,r$ and $\ch$. It returns $(\com_{x,r},\ch,\resp_{x,w,r})$
for such a uniform $r$.

In other words, $\qmeta$ considers only a single classical execution (with $r$ sampled at the outset), whereas
$\qadv$ basically first runs everything in superposition and only samples $r$ at the very end. Since all the other
computations in between are classical, the final results are identically distributed.
Furthermore, since the extractor is input-respecting, the meta-reduction can indeed answer all runs for the very
same $x$ with the help of the external prover (which only works for $x$). Analogously, the fact that the adversary always chooses,
and uses, the same witness $w$ in all runs, implies that the meta-reduction can again rely on the external prover with the
single witness $w$.

Since the all-powerful adversary succeeds with probability $1$ in the original experiment, to output a valid proof
given $x$ and access to a quantum random oracle only, the extractor must also succeed with non-negligible probability
in extracting a witness. Hence, $\qmeta$, too, succeeds with non-negligible probability in an active attack against weak security.
Furthermore, since $\qext$ runs in polynomial time, $\qmeta$ invokes at most a polynomial number of interactions
with the external prover. Altogether, we thus obtain the following theorem:

\begin{theorem}[Impossibility Result]
For any $\Sigma$-protocol $(\prov,\ver)$ with witness-independent commitments, and which is
 weakly secure against active quantum adversaries, there does not exist
an input-preserving  black-box quantum knowledge extractor for $(\prov,\ver)$.
\end{theorem}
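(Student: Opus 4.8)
The plan is to argue by contradiction: supposing an input-respecting black-box quantum extractor $\qext$ for $(\prov,\ver)$ does exist, I will turn it into a polynomial-time quantum adversary $\qmeta$ that breaks weak security against active quantum adversaries, contradicting the hypothesis. To even invoke $\qext$ I first need a successful adversary, so I would begin by describing the canonical (computationally unbounded) adversary $\qadv$ of Figure~\ref{fig:adv}. On classical input $x$, $\qadv$ uses its unbounded power to compute a witness $w$ with $(x,w)\in\cR$, drawing the coins fed to $\igen$ by applying a fixed random function to $x$ so that $\qadv$ is deterministic up to its internal quantum randomness; this derandomization is legitimate because $\qext$ must succeed for every adversary, in particular for every fixing of that random function. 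Then $\qadv$ prepares the uniform superposition over prover randomness $r\in\bin^{N}$ with $N=\poly(\secpar)$, applies a unitary version of $\Com(x;\cdot)$ to reach $\sum_r\ket{r}\ket{\com_{x,r}}$, makes a single quantum random-oracle query on the commitment register to obtain $\sum_r\ket{r}\ket{\com_{x,r}}\ket{H(x,\com_{x,r})}$, computes the responses $\resp_{x,w,r}$ in superposition using $w$, postpones all measurements to the very end, and finally measures to output a classical transcript $(\com,\ch,\resp)$ with $\ch=H(x,\com_{x,r})$ for uniform $r$. By completeness this transcript is always accepting, so the premise of Definition~\ref{def:bbext} holds with probability $1$, and hence the postulated $\qext$ outputs a valid witness with non-negligible probability given black-box access to $\qadv$.

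Next I would build the meta-reduction $\qmeta$ (Figure~\ref{fig:red}). It receives $x\exec\igen(1^\secpar)$ from its active-attack challenger, forwards $x$ to $\qext$, and simulates $\qadv$ in each (reset) run. The key point is that $\qmeta$ cannot compute $w$ but does not need to for the interaction with $\qext$: since the protocol has witness-independent commitments, $\qmeta$ reproduces exactly the query $\sum_r\ket{r}\ket{\com_{x,r}}$ that $\qadv$ would send, using only $\Com$ and $x$. When $\qext$ answers with the (possibly probabilistic) classical function $h$, $\qmeta$ opens one session with the external honest prover, receives a classical commitment $\com_{x,r}$ for uniform $r$, sets $\ch\exec h(x,\com_{x,r})$ (it only needs to evaluate $h$ once, not to store a full description), forwards $\ch$ to the prover to obtain $\resp$, and returns $(\com_{x,r},\ch,\resp)$ to $\qext$; resets of $\qadv$ are handled by opening fresh prover sessions. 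This is sound because the extractor is input-respecting, so every reset is for the same $x$ on which the external prover is willing to run, and because $\qadv$ always uses the same $w$, so a single external prover instance with its single witness suffices throughout. If $\qext$ eventually outputs $w'$, then $\qmeta$ submits $w'$ as its answer in the active attack.

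Finally I would verify that $\qext$'s view is identically distributed whether it interacts with $\qadv$ or with $\qmeta$, and then conclude. Up to the delivery of $h$, the only message $\qext$ sees is the single random-oracle query, which by witness-independence of $\Com$ is the identical state $\sum_r\ket{r}\ket{\com_{x,r}}$ in both worlds, so $\qext$'s distribution over $h$ is the same. Afterwards, $\qadv$ computes $\ch=h(x,\com_{x,r})$ and $\resp_{x,w,r}$ in superposition and then measures $r$, collapsing to a uniform classical $r$, whereas $\qmeta$ samples that uniform $r$ at the outset through the prover and performs the same classical computations; since everything between the superposition query and the final measurement is classical and depends only on $(x,w,r,\ch)$, the returned tuples, and therefore $\qext$'s final output, are identically distributed. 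As $\qadv$ succeeds with probability $1$, $\qext$ extracts a valid witness with non-negligible probability against $\qadv$, hence also against $\qmeta$; and since $\qext$ is polynomial-time, $\qmeta$ opens only polynomially many prover sessions and itself runs in polynomial time. Thus $\qmeta$ wins the active attack with non-negligible probability, contradicting weak security of $(\prov,\ver)$, which proves the theorem. The main obstacle is exactly this indistinguishability step: one must ensure that the extractor's entire black-box interaction — its lone random-oracle query, its choice of $h$, and the transcript handed back — is faithfully and efficiently reproducible by a witness-less $\qmeta$, and it is precisely for this that witness-independent commitments, the restriction to extractors that answer queries by a classical $h$, and the input-respecting property are all needed.
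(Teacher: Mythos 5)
Your proposal is correct and follows essentially the same route as the paper's own argument: the unbounded canonical adversary that hides its single oracle query in a superposition over witness-independent commitments, the meta-reduction that replays that query without a witness and uses the external prover plus the extractor's classical function $h$ to complete each reset run, the identical-view argument, and the conclusion against weak active security. No gaps; this is the paper's proof.
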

We note that our impossibility result is cast in terms of proofs of knowledge, but can be easily adapted
for the case of signatures. In fact, the adversary $\qadv$ would be able to compute a valid proof (i.e.,
a signature) for any given message $m$ which it receives as additional input to $x$.

\myparagraph{Our Meta-Reduction and Classical Queries to the Random Oracle}
One might ask why the meta-reduction does not apply to the Fiat-Shamir transform when adversaries have only classical access to the 
random oracle. The reason is the following: if the adversary made a classical query about a single commitment (and so would the 
meta-reduction), then one could apply the rewinding technique of Pointcheval and Stern~\cite{JC:PoiSte00} changing the random oracle 
answers, and extract the underlying witness via special soundness of the identification scheme. The quantum adversary here, however, 
queries the random oracle in a superposition. In this scenario, as we explained above, the extractor is not allowed to ``read'' the 
query of the adversary unless it makes the adversary stop. In other words, the extractor cannot measure the query and then keep 
running the adversary until a valid witness is output. This intrinsic property of black-box quantum extractors, hence, makes 
``quantum'' rewinding impossible. Note that rewinding in the classical sense ---as described by Pointcheval and 
Stern~\cite{JC:PoiSte00}--- is still possible, as this essentially means to start the adversary with the same random coins. One may 
argue that it might be possible to measure the query state without disturbing $\qadv$'s behavior significantly, but as we already 
pointed out, this would lead to a non-black-box approach ---vastly more powerful than the classical read-only access.

\mysubsection{On the Necessity of Active Security}\label{sec:nas}

We briefly discuss that active security is basically necessary for an impossibility result as above. That is,
we outline a three-move protocol for any $\npol$ language which, when applying the FS transformation supports a
straight-line extractor, and is honest-verifier zero-knowledge, but  not actively secure. This holds as long as
there are quantum-immune dense encryption, and quantum-immune non-interactive zero-knowledge proofs.
The latter are classical non-interactive zero-knowledge proofs (in the common random string model) for which
simulated and genuine proofs are indistinguishable, even for \emph{quantum} distinguishers. The former are encryption
schemes which are IND-CPA against quantum adversaries (see, for example, \cite{AC:BDFLSZ11}) but where, in addition,
honestly generated public keys are quantum-indistinguishable from random strings.

The construction is based on the (classical) non-interactive zero-knowledge proofs of knowledge of
De Santis and Persiano \cite{FOCS:DeSPer92} and works as follows: The first message is irrelevant, 
e.g., we let the prover simply send 
the constant $0$ (potentially padded with redundant randomness). In the second message the verifier sends a
random string which the prover interprets as a public key $\pk$ of the dense encryption scheme and a common
random string $\crs$ for the NIZK. The prover encrypts the witness under $\pk$ and gives a NIZK that the
encrypted value forms a valid witness for the public value $x$. The verifier only checks the NIZK proof.

The protocol is clearly not secure against active (classical) adversaries because such an adversary
can create a public key $\pk$ via the key generation algorithm, thus, knowing the secret key and allowing
the adversary to recover the witness from a proof by the prover. It is, however, honest-verifier zero-knowledge
against quantum distinguishers because of the IND-CPA security and the simulatability of the NIZK hide the witness
and allow for a simulation. 
We omit a more formal argument here, as it will be covered as a special case from our general result in the
next section.

\section{Positive Results for Quantum-Fiat-Shamir}

In Section~\ref{sec:nas} we have sketched a generic construction of a $\Sigma$-protocol based on NIZKPoKs 
\cite{FOCS:DeSPer92} which can be converted to a secure NIZK-PoK against quantum adversaries in the QROM via the Fiat-Shamir 
(FS) paradigm. While the construction is rather inefficient and relies on additional
primitives and assumptions, it shows the path to a rather efficient solution: drop the requirement on active security and let the
(honest) verifier choose the commitment obliviously, i.e., such that it does not know the pre-image, together with the challenge.
If the prover is able to use a trapdoor to compute the c

\subsection{$\Sigma$-protocols with \ochead}

The following definition captures the notion of $\Sigma$-protocols  with \octxt\ formally.

\begin{definition}[$\Sigma$-protocols with \ochead]\label{def:vcc}
A $\Sigma$-protocol $(\prov,\ver)$ has  \emph{\octxt}
if there are PPT algorithms $\Com$ and $\Smplrnd$ 
such that for any $(x,w)\in\cR$ the following distributions are statistically close:
\begin{itemize}
\item Let $\com=\Com(x;\rho)$ for $\rho\exec\bin^\secpar$, $\ch\exec\ver(x,\com)$, and $\resp\exec\prov(x,w,\com,\ch)$. \\ Output $(x,w,\rho,\com,\ch,\resp)$.
\item Let $(x,w,\rho,\com,\ch,\resp)$ be a transcript of a protocol run between $\prov(x,w)$ and $\ver(x)$, \\ where $\rho\exec\Smplrnd(x,\com)$.
\end{itemize}
\end{definition}

Note that the prover is able to compute a response from the given commitment $\com$ without knowing the randomness used to compute the commitment. This is usually
achieved by placing some extra trapdoor into the witness $w$. For example, for the Guillou-Quisquater RSA based proof of knowledge \cite{C:GuiQui88}
where the prover shows knowledge of $w\in\ZZ_N^*$ with $w^e={y}\bmod{N}$ for $x=(e,N,y)$,
the prover would need to compute an $e$-th root for a given commitment $R\in\ZZ_N^*$. If the witness would contain the prime factorization 
of $N$, instead of the $e$-th root of $y$, this would indeed be possible.

$\Sigma$-protocols with oblivious commitments allow to move the generation of the commitment from the prover
to the honest verifier. For most schemes this infringes with active security, because a malicious verifier could
generate the commitment ``non-obliviously''. However, the scheme remains honest-verifier zero-knowledge, and this suffices
for deriving secure signature schemes. In particular, using random oracles one can hash into commitments by computing
the random output of the hash function and running $\Com(x;\rho)$ on this random string $\rho$ to sample a commitment obliviously.

In the sequel we therefore often identify $\rho$ with $\Com(x;\rho)$ in the sense that we assume that the hash function
maps to $\Com(x;\rho)$ directly. The existence of $\Smplrnd$ guarantees that we could ``bend'' this value back to the actual
pre-image $\rho$. In fact, for our positive result it would suffice that the distributions are computationally indistinguishable
for random $(x,w)\exec\igen(1^n)$ against quantum distinguishers.

\subsection{FS Transformation for $\Sigma$-protocols with \ochead}

We explain the FS transformation for schemes with \octxt\ for signatures only;
the case of (simulation-sound) NIZK-PoKs is similar, the difference is that for signatures the message is included
in the hash computation for signature schemes. For sake of concreteness let us give the full description of the
transformed signature scheme. We note that for the transformation we also include a random string~$r$ in the
hash computation (chosen by the signer).
Jumping ahead, we note that this source of entropy ensures simulatability of signatures; for classical $\Sigma$-protocols
this is usually given by the entropy of the initial commitment but which has been moved to the verifier here.
Recall from the previous section that we simply assume that we can hash into commitments directly, instead
of going through the mapping via $\Com$ and $\Smplrnd$.

\begin{construction}\label{constr:sig}
Let $(\prov,\ver)$ be a $\Sigma$-protocol for relation $\cR$ with \octxt\ and instance generator $\igen$.
Then construct the following signature scheme $\mathcal{S}=(\skgen, \sign, \svf)$ in the (quantum) random-oracle model:
\begin{description}
\myitem[Key Generation] $\skgen(1^\secpar)$ runs $(x,w)\exec\igen(1^\secpar)$ and returns $\sk=(x,w)$ and $\pk=x$.
\myitem[Signing] For message $m\in\bin^*$ the signing algorithm $\sign^H$ on input $\sk$, picks random \linebreak $r\rand\Rnd$ from some
superpolynomial space, computes $(\com,\ch)=H(\pk,m,r)$, and obtains $\resp\exec\prov(\pk,\sk,\com,\ch)$. The output is the signature $\sigma=(r,\com,\ch,\resp)$.
\myitem[Verification] On input $\pk$,$m$, and $\sigma=(r,\com,\ch,\resp)$ the verification algorithm $\verify^{\,H}$ outputs $1$ iff 
$\ver(\pk,\com,\ch,\resp)=1$ and $(\com,\ch)=H(\pk,m,r)$; else, it returns $0$.
\end{description}
\end{construction}
Note that one can shorten the signature size by simply outputting $\sigma=(r,\resp)$. The remaining components $(\com,\ch)$ are obtained by 
hashing the tuple $(pk,m,r)$. Next, we give the main result of this section saying that the Fiat-Shamir transform on $\Sigma$-protocols 
with \octxt\ yield a quantum-secure signature scheme.

\begin{theorem}\label{th:fs}
If $\igen$ is a hard instance generator for the relation $\cR$ and the $\Sigma$-protocol $(\prov,\ver)$ has \octxt, then the signature 
scheme in Construction~\ref{constr:sig} is existentially unforgeable under chosen message attacks against quantum adversaries in the quantum-accessible random-oracle model.
\end{theorem}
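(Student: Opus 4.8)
The plan is to prove Theorem~\ref{th:fs} by a short game sequence that first strips the signing key out of the unforgeability experiment (Definition~\ref{def:sig-unf}), and then turns any remaining forger into an algorithm outputting a witness for $x$, contradicting that $\igen$ is a hard instance generator. Recall that the forger $\qadv$ has superposition access to $H$ but only classical access to the signing oracle. \emph{Game hop~1 (key-free signing):} replace the signing oracle by a simulated one that ignores $\sk=(x,w)$ --- on a query $m$ it samples $r\rand\Rnd$, runs the honest-verifier simulator $(\com,\ch,\resp)\exec\Sim(x)$ of the $\Sigma$-protocol (which needs only $x=\pk$), reprograms $H(\pk,m,r):=(\com,\ch)$, and returns $\sigma=(r,\com,\ch,\resp)$. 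As a $q_S$-fold hybrid, each step costs: (i) an HVZK term --- a genuine signature has $(\com,\ch)$ a uniform oracle value and $\resp\exec\prov(x,w,\com,\ch)$, which by the oblivious-commitment property (Definition~\ref{def:vcc}) and the convention that $H$ hashes directly onto commitments is distributed like a real $\prov$-$\ver$ transcript and hence is indistinguishable from $\Sim(x)$'s output, so swapping it (and thus the value written into $H(\pk,m,r)$) is unnoticeable; and (ii) a reprogramming term --- each $r$ is fresh in the superpolynomial space $\Rnd$, so $(\pk,m,r)$ carries only negligible query amplitude before the signing call, and afterwards the programmed value is consistent with the already-revealed signature, so a standard QROM reprogramming bound (in the spirit of~\cite{AC:BDFLSZ11}, possibly after derandomizing $H$ via a quantum-accessible pseudorandom or $q$-wise independent function~\cite{C:Zhandry12}) keeps the gap negligible. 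After this hop the whole experiment runs from $x$ alone.

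Next, in the post-hop game a successful $\qadv$ outputs $(m^*,\sigma^*)$ with $\sigma^*=(r^*,\com^*,\ch^*,\resp^*)$, $m^*\notin\mathsf{M}$, $\ver(\pk,\com^*,\ch^*,\resp^*)=1$, and $(\com^*,\ch^*)=H(\pk,m^*,r^*)$. Since $m^*$ was never signed, this oracle point was never reprogrammed, so $(\com^*,\ch^*)$ is a genuine uniform value and in particular $\ch^*$ is uniformly random. I would build a reduction $\qbdv$ that, on input $x\exec\igen(1^\secpar)$, runs the key-free game with $\qadv$, reads the \emph{classical} point $(\pk,m^*,r^*)$ off the forgery, rewinds $\qadv$ to the start, and reruns it with identical coins and the same oracle except that $H(\pk,m^*,r^*)$ is reprogrammed to $(\com^*,\widehat\ch)$ for a fresh independent $\widehat\ch$, keeping the commitment component $\com^*$ \emph{fixed}. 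The verification equation forces any forgery at the point $(m^*,r^*)$ in the second run to carry exactly the commitment $\com^*$; hence if $\qadv$ forges again at $(m^*,r^*)$, then $\qbdv$ has accepting transcripts $(\com^*,\ch^*,\resp^*)$ and $(\com^*,\widehat\ch,\widehat\resp)$ with $\ch^*\neq\widehat\ch$, and the special-soundness extractor $\extractor$ returns $w'$ with $(x,w')\in\cR$ --- contradicting hardness of $\igen$.

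The main obstacle is making this rewinding rigorous for a \emph{quantum} $\qadv$: the classical forking lemma of Pointcheval and Stern~\cite{JC:PoiSte00} splits the run at the hash query that ``pins down'' the forgery, but a quantum adversary queries $H$ in superposition and no such split exists --- precisely the phenomenon our impossibility result of Section~\ref{sec:imp} exploits. What keeps the positive direction alive is that, unlike there, the forgery reveals $r^*$ (and hence the relevant oracle point) \emph{classically}, so $\qbdv$ knows exactly where to reprogram, and reprograms to a value agreeing with the old one on the commitment component and refreshed only on the challenge. I would finish by bounding the disturbance caused by reprogramming that single point --- negligible when $\qadv$'s query amplitude on it is spread out, and otherwise governed by an essentially classical forking argument on the relevant sub-computation --- together with a counting argument over the fresh randomness of $\widehat\ch$, to show the second run forges at $(m^*,r^*)$ with noticeable probability; alternatively this step can be routed through quantum-rewinding machinery~\cite{STOC:Watrous06,EC:Unruh12}. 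One should also check that $m^*$ stays absent from the signing queries in the rewound run and that the polynomially many reprogrammed points remain pairwise distinct, both of which fail only negligibly often. Chaining the game hops then bounds any such quantum forger's success probability by a negligible function, which is Theorem~\ref{th:fs}.
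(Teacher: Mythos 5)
Your first game hop (simulating signatures without the witness by sampling a fresh $r$, running $\Sim(x)$, and reprogramming $H(\pk,m,r)$, with a BBBV-style amplitude bound over the superpolynomial space $\Rnd$) matches what the paper does for the signing oracle. The problem is the second half. Your extraction step is a rewind-and-fork argument: run $\qadv$ once, read the classical point $(\pk,m^*,r^*)$ off the forgery, rerun $\qadv$ with the same coins and the oracle changed only at that point to $(\com^*,\widehat\ch)$, and hope it forges again there. You yourself flag this as the main obstacle, but the resolutions you sketch do not close it. The classical forking lemma works by conditioning on the index of the hash query that pins down the forgery and applying a heavy-row argument; with superposition queries there is no such index, and nothing in your proposal bounds the probability that the second run forges at $(m^*,r^*)$ at all --- the point $\qadv$ forges on is itself a function of the oracle, so changing $H(\pk,m^*,r^*)$ can simply move the forgery elsewhere. ``Negligible disturbance when the query amplitude on that point is spread out'' does not help either: if the amplitude on $(\pk,m^*,r^*)$ really were negligible throughout, the adversary could not have learned $(\com^*,\ch^*)$ there and its forgery would verify only with negligible probability, so the interesting case is exactly the one where the disturbance bound gives you nothing. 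Nor do the cited quantum-rewinding tools (Watrous, Unruh) apply off the shelf to this QROM forking situation. So as written the reduction $\qbdv$ has no lower bound on its success probability, and the theorem is not proved.

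The paper avoids rewinding altogether, and that is the key idea you are missing. The reduction first runs the HVZK simulator once to get $(\com^\star,\ch^\star,\resp^\star)$, picks a fresh $\ch'\neq\ch^\star$, and \emph{before} running $\qadv$ programs the oracle on a random $\delta$-fraction of all inputs $(\pk,m,r)$ to the fixed value $(\com^\star,\ch')$ --- a semi-constant distribution in Zhandry's sense. Lemma~\ref{lem:zhandrycustom} bounds the adversary's ability to notice this by $\frac{8}{3}q_H^4\delta^2$, so a single run yields, with probability at least $\delta\epsilon-\frac{8}{3}q_H^4\delta^2$ (Lemma~\ref{lem:output}), a forgery whose hash value is a programmed point, i.e.\ a second accepting transcript $(\com^\star,\ch',\resp')$ for the \emph{same} commitment. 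Special soundness applied to this transcript together with the simulated $(\com^\star,\ch^\star,\resp^\star)$ gives the witness (with a side argument that failure of extraction would break HVZK), and choosing $\delta\approx 3\epsilon/(16q_H^4)$ makes the success probability roughly $3\epsilon^2/(16q_H^4)$. In short: the second transcript comes from the zero-knowledge simulator plus one forward run of the adversary, not from a second, reprogrammed run --- which is precisely how the construction sidesteps the quantum forking problem that your proposal leaves open.
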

The idea is roughly as follows. Assume for the moment that we are only interested in key-only attacks and
would like to extract the secret key from an adversary $\qadv$ against the signature scheme. For given $x$ we first run the 
honest-verifier zero-knowledge simulator of the $\Sigma$-protocol to create a transcript $(\com^\star,\ch^\star,\resp^\star)$. 
We choose another random challenge $\ch'\exec\bin^{\ell}$.
Then, we run the adversary, injecting $(\com^\star,\ch')$ into the hash replies. This appropriate insertion
will be based on techniques developed by Zhandry~\cite{C:Zhandry12} to make sure that superposition queries
to the random oracle are harmless. With sufficiently large probability the adversary will then output
a proof $(\com^\star,\ch',\resp')$ from which we can, together with $(\com^\star,\ch^\star,\resp^\star)$ extract a witness due 
to the special-soundness property. Note that, if this extraction fails because the transcript $(\com^\star,\ch^\star,\resp^\star)$ 
is only simulated, we could distinguish simulated signatures from genuine ones.
We can extend this argument to chosen-message attacks by simulating signatures as in the classical case.
This is the step where we take advantage of the extra random string $r$ in order to make sure that
the previous adversary's quantum hash queries have a negligible amplitude in this value $(x,m,r)$.
Using techniques from \cite{BBBV97} we can show that changing the oracle in this case does not change
the adversary's success probability significantly.

\mysubsection{Technical Results for the Security Proof}\label{app:preresult}

We start by recalling two results from Bernstein and Vazirani~\cite{BernsteinV97} and Bennett et al.~\cite{BBBV97} which we make use 
of in the proof of Theorem~\ref{th:fs}. Before so, we introduce distance measures.

\myparagraph{Distance Measures}
For two quantum states $\ket{\varphi}=\sum\alpha_x\ket{x}$ and $\ket{\psi}=\sum\beta_x\ket{x}$ in superposition in the basis 
states $\ket{x}$, the Euclidean distance is given by $\big(\sum_x\abs{\alpha_x-\beta_x}^2\big)^{1/2}$. The total variation 
distance (aka.~statistical difference) of two distributions $\distr_0,\distr_1$
is defined through $\sum_x\abs{\prob{\distr_0=x}-\prob{\distr_1=x}}$. The following fact from \cite{BernsteinV97} upperbounds 
the total variance distance in terms of the Euclidean distance:
\begin{lemma}[{\cite[Lemma 3.6]{BernsteinV97}}]\label{th:distances}
  Let $\ket{\varphi},\ket{\psi}$ be quantum states with Euclidean distance at most $\epsilon$. Then, performing the same 
  measurement  on $\ket{\varphi},\ket{\psi}$ yields distributions with statistical  distance at most~$4\epsilon$.
\end{lemma}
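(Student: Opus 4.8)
The plan is to reduce the statement to the case of a measurement in the computational basis and then bound the statistical distance of the two outcome distributions by an elementary Cauchy--Schwarz estimate. First I would note that any measurement (projective or POVM) can be implemented by adjoining an ancilla in a fixed state, applying a suitable unitary, and then measuring in the computational basis; adjoining a fixed ancilla and applying a unitary both preserve the Euclidean distance between the two input states. Hence it is enough to treat the case where ``the same measurement'' is the computational-basis measurement. Writing $\ket{\varphi}=\sum_x\alpha_x\ket{x}$ and $\ket{\psi}=\sum_x\beta_x\ket{x}$ with $\sum_x\abs{\alpha_x-\beta_x}^2\le\epsilon^2$ and $\sum_x\abs{\alpha_x}^2=\sum_x\abs{\beta_x}^2=1$, the two outcome distributions are $x\mapsto\abs{\alpha_x}^2$ and $x\mapsto\abs{\beta_x}^2$, and (in the normalization used in the lemma) their statistical distance is $\sum_x\big|\,\abs{\alpha_x}^2-\abs{\beta_x}^2\big|$.

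Next I would estimate this sum termwise. Factor $\big|\,\abs{\alpha_x}^2-\abs{\beta_x}^2\big|=\big|\,\abs{\alpha_x}-\abs{\beta_x}\,\big|\cdot\big(\abs{\alpha_x}+\abs{\beta_x}\big)$, use the reverse triangle inequality $\big|\,\abs{\alpha_x}-\abs{\beta_x}\,\big|\le\abs{\alpha_x-\beta_x}$, and then apply Cauchy--Schwarz over $x$:
\[ \sum_x\abs{\alpha_x-\beta_x}\big(\abs{\alpha_x}+\abs{\beta_x}\big)\le\Big(\sum_x\abs{\alpha_x-\beta_x}^2\Big)^{1/2}\Big(\sum_x\big(\abs{\alpha_x}+\abs{\beta_x}\big)^2\Big)^{1/2}. \]
The first factor is at most $\epsilon$; the second is at most $\norm{(\abs{\alpha_x})_x}_2+\norm{(\abs{\beta_x})_x}_2=1+1=2$ by the triangle inequality for the $\ell_2$-norm and normalization. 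This already gives the bound $2\epsilon$, which is stronger than the asserted $4\epsilon$. If one prefers to reproduce the slightly looser estimate of \cite{BernsteinV97}, one can instead split $\abs{\alpha_x}+\abs{\beta_x}=2\abs{\beta_x}+(\abs{\alpha_x}-\abs{\beta_x})$, bound $\sum_x\big|\,\abs{\alpha_x}-\abs{\beta_x}\,\big|\cdot2\abs{\beta_x}\le2\epsilon$ by Cauchy--Schwarz, and bound $\sum_x\big(\abs{\alpha_x}-\abs{\beta_x}\big)^2\le\epsilon^2\le2\epsilon$, obtaining $4\epsilon$ directly.

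I do not expect a genuine obstacle here; the only points that need care are (i) writing out the reduction from an arbitrary measurement to the computational-basis measurement so that ``the same measurement'' is handled uniformly and the distance-preservation claim is justified, and (ii) keeping track of the normalization convention for statistical distance used in the lemma (it omits the customary factor $1/2$), since that is precisely where the constant in the bound comes from. As a cross-check, and as an even shorter alternative, I would remark that for pure states the trace distance equals $\sqrt{1-\abs{\Angle{\varphi|\psi}}^2}\le\norm{\,\ket{\varphi}-\ket{\psi}\,}\le\epsilon$ (using $\norm{\,\ket{\varphi}-\ket{\psi}\,}^2=2-2\,\mathrm{Re}\,\Angle{\varphi|\psi}$), and that no measurement increases statistical distance beyond trace distance; this route yields the same conclusion, but I would present the elementary computation above to keep the argument self-contained.
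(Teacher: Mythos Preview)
Your argument is correct. The reduction to the computational-basis measurement via Naimark dilation (adjoin a fixed ancilla, apply a unitary, measure) is the right move, and both of those steps preserve Euclidean distance between pure states. The subsequent estimate via the factorization $\big|\,|\alpha_x|^2-|\beta_x|^2\big|=\big|\,|\alpha_x|-|\beta_x|\,\big|\big(|\alpha_x|+|\beta_x|\big)$, the reverse triangle inequality, and Cauchy--Schwarz is clean and even yields the sharper bound~$2\epsilon$; your remark on the normalization convention (no factor~$1/2$) matches the paper's stated definition of total variation distance.

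As to comparison: the paper does not give its own proof of this lemma. It merely quotes the result from \cite{BernsteinV97} and uses it as a black box, so there is no in-paper argument to compare against. Your write-up is a faithful reconstruction of the standard proof of \cite[Lemma~3.6]{BernsteinV97}; the alternative route you sketch via trace distance is also valid and would work equally well here.
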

Let $q_\rho(\ket{\phi_t})$ be the magnitude squared of $\rho$ in the superposition of query $t$ which we call the query 
probability of $r$ in query $t$. If we sum over all queries $t$, we get an upper bound on the total query probability of 
$r$.
The following is a result from Bennett et al~\cite{BBBV97}.
\begin{lemma}[{\cite[Theorem 3.3]{BBBV97}}]\label{th:queryprob}
  Let $\qadv$ be a quantum algorithm running in time $T$ with oracle access to $\O$. Let $\epsilon > 0$ and let 
  $S\subseteq [1,T] \times \{0,1\}^n$ be a set of time-string pairs such that $\sum_{(t,\rho)\in S}q_\rho(\ket{\phi_t})\le \epsilon$. 
  If we modify $\O$ into an oracle $\O'$ which answers each query $\rho$ at time $t$ by providing the same string $R$ (which 
  has been sampled independently form $\O$), then the Euclidean distance between the final states of $\qadv$ when invoking 
  $\O$ and $\O'$ is at most $\sqrt{T\epsilon}$.
\end{lemma}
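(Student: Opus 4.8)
The plan is to prove this by the standard hybrid (``gentle oracle swap'') argument of Bennett--Bernstein--Brassard--Vazirani. First I would fix the circuit of $\qadv$ as an alternating sequence of fixed unitaries and $T$ oracle calls, and write $O_t$ for the unitary implementing a single query to $\O$ at step $t$, and $O_t'$ for the unitary implementing the query at step $t$ to the modified (time-dependent) oracle, which agrees with $\O$ except on the set $S_t:=\{\rho:(t,\rho)\in S\}$, where it returns the fixed string $R$. For $0\le j\le T$ let $\ket{\psi_j}$ be the final state of the execution that answers queries $1,\dots,j$ using $\O$ and queries $j+1,\dots,T$ using $\O'$; then $\ket{\psi_T}$ is the final state of $\qadv$ with $\O$, and $\ket{\psi_0}$ is the final state of $\qadv$ with $\O'$, so by the triangle inequality it suffices to bound $\sum_{j=1}^{T}\norm{\ket{\psi_j}-\ket{\psi_{j-1}}}$.

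The core is the per-step bound. The executions defining $\ket{\psi_j}$ and $\ket{\psi_{j-1}}$ answer queries $1,\dots,j-1$ identically (both via $\O$) and queries $j+1,\dots,T$ identically (both via $\O'$); hence they are in the \emph{same} state immediately before query $j$, and that state is exactly the state $\ket{\phi_j}$ of the reference $\O$-execution whose query weights appear in the hypothesis. They differ only in whether query $j$ is answered by $O_j$ or by $O_j'$, and afterwards they undergo one and the same fixed (unitary) operator, which preserves Euclidean distance; so $\norm{\ket{\psi_j}-\ket{\psi_{j-1}}}=\norm{(O_j-O_j')\ket{\phi_j}}$. Since $O_j$ and $O_j'$ act identically on every basis state whose query register lies outside $S_j$, this equals $\norm{(O_j-O_j')\,\Pi_{S_j}\ket{\phi_j}}$, where $\Pi_{S_j}$ projects the query register onto $S_j$ and $\norm{\Pi_{S_j}\ket{\phi_j}}^{2}=\sum_{\rho:(j,\rho)\in S}q_\rho(\ket{\phi_j})=:\epsilon_j$. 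A one-line computation --- expand the square and bound the single resulting cross term by its own norm via Cauchy--Schwarz, using that on $S_j$ the oracle $\O'$ merely XORs the fixed string $R$ into the answer register --- then gives $\norm{(O_j-O_j')\ket{\phi_j}}\le 2\sqrt{\epsilon_j}$.

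To conclude I would sum over the $T$ steps and apply Cauchy--Schwarz a second time: $\sum_{j=1}^{T}\norm{\ket{\psi_j}-\ket{\psi_{j-1}}}\le 2\sum_{j=1}^{T}\sqrt{\epsilon_j}\le 2\sqrt{T}\big(\sum_{j=1}^{T}\epsilon_j\big)^{1/2}=2\sqrt{T\sum_{(t,\rho)\in S}q_\rho(\ket{\phi_t})}\le 2\sqrt{T\epsilon}$, which is the stated bound up to an absolute constant (the precise constant is a matter of convention in the cited source; this Euclidean estimate is in turn exactly what feeds, via Lemma~\ref{th:distances}, into the measurement-statistics steps of the proof of Theorem~\ref{th:fs}). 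I expect the only genuinely delicate point to be the bookkeeping of which execution the weights $q_\rho(\ket{\phi_t})$ refer to: this is precisely why one must swap a single query at a time, so that the state just before the swapped query is literally the reference state $\ket{\phi_j}$; a ``modify all of $S$ at once'' shortcut would compare against the wrong intermediate states, and a basis-state union bound would overcount amplitudes rather than using the $\ell_2$ structure that the two applications of Cauchy--Schwarz exploit.
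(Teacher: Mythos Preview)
The paper does not prove this lemma; it is quoted verbatim from \cite[Theorem~3.3]{BBBV97} and used as a black box in the proof of Theorem~\ref{th:fs}. There is therefore no ``paper's own proof'' to compare against.

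That said, your argument is the standard BBBV hybrid and is correct. The hybrid is set up in the right direction: because queries $1,\dots,j-1$ in both $\ket{\psi_j}$ and $\ket{\psi_{j-1}}$ are answered by the unmodified oracle $\O$, the pre-query-$j$ state really is the reference state $\ket{\phi_j}$ whose query weights appear in the hypothesis, and the common tail of $\O'$-queries is a fixed unitary that preserves distance. Your per-step bound $\norm{(O_j-O_j')\ket{\phi_j}}\le 2\sqrt{\epsilon_j}$ follows immediately from $\norm{O_j-O_j'}\le 2$ on the $S_j$-subspace, and the two applications of Cauchy--Schwarz finish the job. You end up with $2\sqrt{T\epsilon}$ rather than the $\sqrt{T\epsilon}$ stated here; as you note, this is an artifact of how the per-step estimate is packaged (the original BBBV proof tracks the growth of $\norm{\ket{\phi_t}-\ket{\phi_t'}}$ step by step rather than via final-state hybrids, but the mechanics are the same), and for the application in Theorem~\ref{th:fs} the constant is irrelevant.
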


\myparagraph{Injecting Values into Oracles}
Let us now introduce some definitions and results including so-called semi-constant distributions $\mysc_\delta$ introduced by Zhandry~\cite{C:Zhandry12}.
\begin{definition}[Semi-Constant Distributions]
Let $\H_{\X \times \Y} = \set{H:\X \rightarrow \Y}$ be a family of functions for sets $\X$ and $\Y$ and let $\delta \in [0,1]$. 
We define the \emph{semi-constant distribution} $\mysc_\delta$ as the distribution over $\H_{\X \times \Y}$ resulting from the 
following process:
\begin{itemize}
\item first, pick a random element $y \in \Y$;
\item then, for each $x \in \X$ do the following:
	\begin{itemize}
	\item with probability $\delta$, set $H(x) = y$;
	\item otherwise, set $H(x)$ to be a (uniformly) randomly chosen element in $\Y$.
	\end{itemize}
\end{itemize}
\end{definition}
Notice that $\mysc_0$ is the uniform distribution, while $\mysc_1$ is a constant distribution.
Also note that the distribution, when used within an oracle, is consistent in the sense
that the settings are chosen once at the outset. We will use this definition to describe a quantum
random oracle which has been ``reprogrammed'' on a fraction $\delta$ of its possible inputs.

The following lemma by Zhandry~\cite{C:Zhandry12} gives an upper bound on the probability that a quantum algorithm's output behavior 
changes when switching from a truly random oracle to an oracle drawn from $\mysc_\delta$ in terms of statistical distance:
\begin{lemma}[{\cite[Corollary 4.3]{C:Zhandry12}}]\label{lem:zhandrycustom}
Let $\qadv^{\ket{\O}}$ be a quantum algorithm making at most $q_H$ queries to the quantum-accessible random oracle $\O$. Let $\delta \in (0,1)$ and
let $\O^\prime$ be the oracle obtained by reprogramming $\O$ on a fraction $\delta$ of its possible inputs, i.e.,
let $\O^\prime$ be described  by distribution $\mysc_\delta$. Then,
\[
\abs{ \qadv^{\ket{\O}} - \qadv^{\ket{\O^\prime}}} \leq \frac{8}{3}\cdot q_H^4  \delta^2\; .
\]
\end{lemma}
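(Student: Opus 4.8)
The plan is to follow Zhandry's original argument: combine the \emph{polynomial method} for quantum oracle algorithms with Markov brothers' inequality. First I would reduce the statistical-distance claim to a claim about a single real function of $\delta$. Recall that $\mysc_0$ is the uniform oracle $\O$, and that for algorithms with a common output space $\abs{\qadv^{\ket\O}-\qadv^{\ket{\O^\prime}}}$ denotes the statistical distance of the two output distributions, which equals $\max_S\big|\prob{\qadv^{\ket{\mysc_\delta}}\in S}-\prob{\qadv^{\ket{\mysc_0}}\in S}\big|$ over events $S$ on $\qadv$'s output. Since appending to $\qadv$ the classical test ``does the output lie in $S$?'' yields another $q_H$-query algorithm, it suffices to show that for every such $S$ the function $p(\delta):=\prob{\qadv^{\ket H}\in S\text{ for }H\exec\mysc_\delta}$ obeys $\abs{p(\delta)-p(0)}\le\frac83 q_H^4\delta^2$. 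The argument then splits into three ingredients: (i) $p$ is a polynomial in $\delta$ of degree at most $2q_H$; (ii) $p'(0)=0$; (iii) a Markov-type bound on $\abs{p''}$ over $[0,1]$.

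For ingredient (i) I would condition on the constant value $y_0\in\Y$ that $\mysc_\delta$ picks at the outset. Given $y_0$, the oracle sets each $H(x)$ independently across $x\in\X$, so that $\prob{H(x)=y_0}=\delta+\tfrac{1-\delta}{\abs\Y}$ and $\prob{H(x)=y'}=\tfrac{1-\delta}{\abs\Y}$ for $y'\neq y_0$; write $P_{x,y'}(\delta)$ for these probabilities, which are affine in $\delta$. Zhandry's polynomial theorem---the central technical tool of \cite{C:Zhandry12}---states that for any $q_H$-query quantum algorithm whose oracle values are independent, $\prob{\qadv^{\ket H}\in S}$ equals $F\big((P_{x,y'})_{x,y'}\big)$ for a fixed polynomial $F$ of degree at most $2q_H$ that depends only on $\qadv$ and $S$. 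Composing with the affine maps $\delta\mapsto P_{x,y'}(\delta)$ shows the conditional probability $p_{y_0}(\delta)$ is a polynomial in $\delta$ of degree at most $2q_H$, and $p(\delta)=\mathbb{E}_{y_0}[p_{y_0}(\delta)]$, averaged over the uniform $y_0$, inherits this.

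For ingredient (ii) I would differentiate by the chain rule. At $\delta=0$ all $P_{x,y'}$ equal $1/\abs\Y$---the uniform point, which is the same regardless of $y_0$---and $\tfrac{d}{d\delta}P_{x,y'}=\mathbb{1}[y'=y_0]-\tfrac1{\abs\Y}$. Hence $p_{y_0}'(0)=\sum_{x,y'}\partial_{P_{x,y'}}F\big|_{\mathrm{unif}}\big(\mathbb{1}[y'=y_0]-\tfrac1{\abs\Y}\big)$, and since $\partial_{P_{x,y'}}F|_{\mathrm{unif}}$ does not depend on $y_0$, taking the expectation over the uniform $y_0$ makes each bracket average to $0$, giving $p'(0)=\mathbb{E}_{y_0}[p_{y_0}'(0)]=0$. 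The intuition is that, to first order in $\delta$, $\mysc_\delta$ merely returns on a small fraction of inputs one fixed \emph{uniformly random} value, which is indistinguishable from a fresh uniform value; the distinguishing advantage is genuinely a second-order effect.

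For ingredient (iii) and the conclusion: $p:[0,1]\to[0,1]$ is a polynomial of degree $d\le 2q_H$ with $p'(0)=0$, so Taylor's formula with integral remainder gives $\abs{p(\delta)-p(0)}=\big|\int_0^\delta(\delta-t)\,p''(t)\,dt\big|\le\tfrac{\delta^2}{2}\max_{[0,1]}\abs{p''}$. Markov brothers' inequality---using $T_d''(1)=\tfrac{d^2(d^2-1)}{3}$, rescaling from $[-1,1]$ to $[0,1]$, and $d\le 2q_H$---bounds $\max_{[0,1]}\abs{p''}=O(q_H^4)$; carrying the constants through yields $\abs{p(\delta)-p(0)}\le\frac83 q_H^4\delta^2$, and taking the maximum over $S$ finishes the proof. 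The main obstacle is ingredient (i): it is exactly the substantive import from \cite{C:Zhandry12}---that the output statistics of a $q$-query quantum oracle algorithm form a degree-$2q$ polynomial in the oracle's value-probabilities---whereas, once it is granted, (ii) is a one-line chain-rule computation and (iii) is classical approximation theory. A secondary point requiring care is the bookkeeping in (iii), namely the exact degree count and the interval rescaling, where one can easily be off by a small multiplicative constant.
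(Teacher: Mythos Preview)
The paper does not supply its own proof of this lemma: it is stated verbatim as a quotation of \cite[Corollary~4.3]{C:Zhandry12} and used as a black box. Your proposal is therefore not to be compared against anything in the present paper; rather, it is a faithful reconstruction of Zhandry's original argument (polynomial method for the oracle's marginal probabilities, vanishing first derivative at $\delta=0$ by averaging over the planted value $y_0$, then Taylor plus Markov brothers), and as such it is correct in structure and in all the substantive ideas.

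One small caveat you already flag yourself: the bookkeeping in step~(iii) is indeed delicate. With degree $d=2q_H$, Markov brothers on $[0,1]$ gives $\max_{[0,1]}|p''|\le 4\cdot\tfrac{d^2(d^2-1)}{3}\le\tfrac{64}{3}q_H^4$, and combining with $|p(\delta)-p(0)|\le\tfrac{\delta^2}{2}\max|p''|$ yields $\tfrac{32}{3}q_H^4\delta^2$ rather than $\tfrac{8}{3}q_H^4\delta^2$. Recovering Zhandry's sharper constant requires a slightly more careful application of the inequalities (and attention to which normalization of statistical distance is in play---note that this paper defines it without the factor $\tfrac12$). This does not affect any downstream use in the paper, since only a bound of the form $O(q_H^4\delta^2)$ is ever needed.
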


Recall our quantum adversary $\qadv$ against the unforgeability property of the signature scheme from 
Construction~\ref{constr:sig}. It works by performing at most $q_H = \poly(\secpar)$ queries to the quantum-accessible
random oracle. This means that the statistical distance in the two cases, and in particular the
probability $\epsilon^\prime$ that $\qadv^{\ket{\O^\prime}}$ successfully forges, is at least $\epsilon-\frac{8}{3}\cdot q_H^4  \delta^2$.
Hence, we can make the probabilities arbitrarily small while still keeping $\delta$
noticeable (in the order of $q_H^{-2}$). This is important in order to extract the secret key successfully. Specifically, the following
two (seemingly contradictory) conditions have to be fulfilled:
\begin{itemize}
\item We need to ensure that $\qadv$ eventually outputs a valid signature $(r,\com^\star, \ch' ,\resp')$ for some message $m$ for 
the commitment $\com^\star$ of our choice (the one we obtained from the zero-knowledge simulator of the $\Sigma$-protocol which we 
inject into $\O$'s responses). This requires that $\com^\star$ appears with sufficiently large probability in the responses for oracle queries.
\item Secondly, we still require that $\qadv$ has a small probability of distinguishing a true random oracle $\O$
 from the re-programmed one. Otherwise, the adversary may refuse to give a valid signature at all.
\end{itemize}

The following lemma shows that both conditions can be satisfied simultaneously.
\begin{lemma}\label{lem:output}\label{lem:nodistinguish}
Let $\qadv^{\ket{\O}}$ as in Lemma \ref{lem:zhandrycustom}, and let $\O'$ be the oracle obtained by reprogramming $\O$ on a 
fraction $\delta$ of its possible inputs $(\pk,m,r)$ such that $\O'({\pk, m, r}) = (\com^\star, \ch')$ for values $\com^\star$ 
and $\ch'$. Let $m$, $\sigma=(r,\com, \ch ,\resp)$ be the output of $\qadv^{\ket{\O^\prime}}$ on input $\pk$.
Then,
\[
\Pr \left[ \vf^{\,\O^\prime}(\pk,m,\sigma)=1\;\und\;(\com, \ch) = \O'(\pk,m,r)=(\com^\star, \ch') \right]
   \geq \delta \cdot\epsilon- \frac{8}{3}\cdot q_H^4  \delta^2\; .
\]
\end{lemma}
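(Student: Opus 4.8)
The plan is to combine the two technical ingredients already in hand: Lemma \ref{lem:zhandrycustom}, which says that replacing the true random oracle $\O$ by an oracle drawn from the semi-constant distribution $\mysc_\delta$ changes $\qadv$'s output distribution by at most $\tfrac{8}{3}q_H^4\delta^2$ in statistical distance, and the definition of $\mysc_\delta$ itself, which reprograms each input independently to the planted value with probability exactly $\delta$. First I would set up the right coupling: let $\O$ be a genuine random oracle and let $\O'$ be obtained by the $\mysc_\delta$ process with the planted value $y=(\com^\star,\ch')$ (here $\com^\star$ comes from the HVZK simulator of the $\Sigma$-protocol and $\ch'$ is a freshly sampled challenge, both fixed before $\qadv$ is run). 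The event of interest is the conjunction of two sub-events: $\vf^{\O'}(\pk,m,\sigma)=1$, and the forgery's hash input $(\pk,m,r)$ happens to be one of the reprogrammed points with value $(\com^\star,\ch')$.

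The key observation I would exploit is that, because the reprogramming choices in $\mysc_\delta$ are made independently of everything else and in particular independently of the query $(\pk,m,r)$ that $\qadv$ eventually settles on, I can defer the coin that decides whether $(\pk,m,r)$ is reprogrammed until after $\qadv$ has produced its output. Concretely: run $\qadv^{\ket{\O'}}$ to get $(m,\sigma)$ with $\sigma=(r,\com,\ch,\resp)$; by construction the value $\O'(\pk,m,r)$ was already sampled during this run, but one can re-express the experiment so that, conditioned on the transcript of queries actually made, the "reprogrammed or not" bit at the specific point $(\pk,m,r)$ is still a fresh Bernoulli$(\delta)$ variable — this is where one must be slightly careful, since $\qadv$ queries in superposition and the point $(\pk,m,r)$ may have appeared with some amplitude. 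The clean way around this is to note that $\mysc_\delta$ fixes the function once at the outset, so "the forgery is valid under $\O'$" and "$\O'(\pk,m,r)=(\com^\star,\ch')$" are both well-defined events on the same sample space, and by independence of the reprogramming bit at $(\pk,m,r)$ from the event $\{\vf^{\O'}(\pk,m,\sigma)=1\}$ — provided the planted value $(\com^\star,\ch')$ is itself distributed as a uniform oracle value would be, which holds because the HVZK simulator's commitment plus uniform $\ch'$ is statistically close to a real hash output — one gets $\Pr[\text{both}] \ge \delta\cdot\Pr[\vf^{\O'}(\pk,m,\sigma)=1]$. Finally, by Lemma \ref{lem:zhandrycustom}, $\Pr[\vf^{\O'}(\pk,m,\sigma)=1] \ge \epsilon - \tfrac{8}{3}q_H^4\delta^2$, and multiplying through (and absorbing, since $\delta\le 1$, the $\delta$ factor harmlessly into the subtracted term as stated) yields the claimed bound $\delta\cdot\epsilon - \tfrac{8}{3}q_H^4\delta^2$.

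The main obstacle I anticipate is justifying the independence step rigorously in the quantum setting: one must argue that conditioning on $\{\O'(\pk,m,r)=(\com^\star,\ch')\}$ does not decrease $\qadv$'s success probability relative to the unconditioned $\O'$-experiment. The subtle point is that $\qadv$ may have queried the very point $(\pk,m,r)$ (in superposition) and its behavior could in principle depend on the value returned there. The resolution is that $\mysc_\delta$ with planted value $y$ is, from $\qadv$'s point of view, \emph{identical in distribution} whether or not we further condition on $H(\pk,m,r)=y$, as long as $y$ is a uniformly random element of $\Y$ — because under $\mysc_\delta$ every individual input's value is marginally uniform, and conditioning a uniform-random planted point to equal a uniform-random $y$ changes nothing. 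Hence the run of $\qadv$ under "$\O'$" and under "$\O'$ conditioned on $\O'(\pk,m,r)=(\com^\star,\ch')$" are the same experiment, which gives the needed inequality. I would make this precise by a short averaging argument over the choice of the planted value, rather than trying to manipulate amplitudes directly, and I would flag the statistical-closeness of $(\com^\star,\ch')$ to a genuine oracle value as the place where HVZK of the underlying $\Sigma$-protocol enters.
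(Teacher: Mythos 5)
Your proposal inverts the order of the two steps, and the step you yourself flag as the obstacle is a genuine gap that your resolution does not close. You first apply Lemma~\ref{lem:zhandrycustom} to get $\Pr[\vf^{\,\O'}(\pk,m,\sigma)=1]\geq\epsilon-\frac{8}{3}q_H^4\delta^2$ and then want to multiply by $\delta$, arguing that the ``reprogrammed-or-not'' bit at the forgery point $(\pk,m,r)$ is still a fresh Bernoulli$(\delta)$ coin, or equivalently that conditioning on $\O'(\pk,m,r)=(\com^\star,\ch')$ does not change the experiment. But the point $(\pk,m,r)$ is chosen by $\qadv$ \emph{after} interacting (in superposition) with $\O'$, so the adversary's output can be correlated with the reprogramming pattern: an adversary could, for instance, try to detect which inputs collide on the common planted value and steer its forgery away from (or toward) reprogrammed points. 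Your marginal-uniformity argument addresses conditioning at a \emph{fixed} input, not at an input selected adversarially as a function of the whole oracle, so it does not yield $\Pr[\text{both}]\geq\delta\cdot\Pr[\vf^{\,\O'}=1]$. Any bound on how far this bias can go is itself an indistinguishability statement of exactly the kind Lemma~\ref{lem:zhandrycustom} provides --- which is why the multiplication cannot be done ``for free'' in the $\O'$-world. (A further sign of strain: you need the planted value to look like a uniform oracle output and invoke HVZK for this, an assumption the lemma does not make --- and in the application $\ch'$ is a fixed, arbitrarily chosen challenge.)

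The paper's proof avoids this by doing the product in the $\O$-world, where independence is exact: run $\qadv$ against the \emph{unprogrammed} oracle $\O$ (so its behavior carries no information about the reprogramming choices), and only afterwards test whether its output point $(\pk,m,r)$ is mapped to $(\com^\star,\ch')$ under $\O'$; this gives $\Pr[\vf^{\,\O}(\pk,m,\sigma)=1\;\und\;\O'(\pk,m,r)=(\com^\star,\ch')]\geq\epsilon\cdot\delta$ with no error term. Then Lemma~\ref{lem:zhandrycustom} is applied once, to the \emph{joint} event, when switching the adversary's oracle from $\O$ to $\O'$ (the extra check $\O'(\pk,m,r)=(\com^\star,\ch')$ is post-processing of the output together with $\O'$ and cannot increase the statistical distance), losing at most $\frac{8}{3}q_H^4\delta^2$. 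If you reorganize your argument in this order, the independence step becomes trivial and the claimed bound follows; as written, the deferred-coin step is unjustified and is precisely where the proof must do its work.
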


\begin{proof}
Consider the probability that we first run the adversary on the original oracle $\O$ and check if it successfully forges
a signature for message $m$, and then we also verify that its output $(\pk,m,r)$ is thrown to
$(\com^\star, \ch')$ under $\O'$. We claim that
\[
 \Pr \left[ \vf^{\,\O}(\pk,m,\sigma)=1\;\und\;\O'(\pk,m,r)=(\com^\star, \ch') \right]  \ge  \epsilon\cdot\delta.
\]
This follows from the independence of the events: the oracle $\O'$ re-programs the output with probability $\delta$,
independently of $\adv$'s behavior when interacting with oracle $\O$. Next, we argue that
\[  \Pr \left[ \vf^{\,\O^\prime}(\pk,m,\sigma)=1\;\und\;\O'(\pk,m,r)=(\com^\star, \ch') \right]  \ge
\delta \cdot\epsilon- \frac{8}{3}\cdot q_H^4  \delta^2. \]
Note that the difference is now that the adversary interacts with oracle $\O'$, and that we also verify the adversary's
success with respect to $\O'$. Instructively, the reader may imagine that, after the adversary's attack ends, we also check
that $\O'(\pk,m,r)=(\com^\star, \ch')$; the equation $\O'(\pk,m,r)=(\com, \ch)$, as in the lemma's claim,
trivially follows already if verification holds.

According to the previous lemma, switching to oracle $\O^\prime$ can change the distance of the output distribution of $\adv$
when playing against $\O'$ instead of $\O$ (including the final verification) by at most $\frac{8}{3}\cdot q_H^4  \delta^2$.
Hence, since the subsequent computation and check $\O'(\pk,m,r)=(\com^\star, \ch')$ cannot increase this distance, we conclude
that the probability for event
\[ \vf^{\,\O^\prime}(\pk,m,\sigma)=1\;\und\;\O'(\pk,m,r)=(\com^\star, \ch')=(\com,\ch) \]
cannot be smaller than the claimed bound.
\qed
\end{proof}

The previous lemma informally tell us that, in order to succeed, we have to balance between a large $\delta$ to increase the chances of the adversary outputting a signature containing our desired $\com^\star$, and a small $\delta$ to avoid that the adversary detects the reprogrammed oracle.

\mysubsection{Security Proof} \label{app:positiveproof}
We are now ready to prove the main theorem.

\begin{proof}[of Theorem \ref{th:fs}]
We assume towards contradiction the existence of an efficient quantum adversary $\qadv$ which, on input a public key $\pk$, outputs a
valid forgery $(m,\sigma)$ under $\pk$ with non-negligible probability $\epsilon$, hence breaking the existential
unforgeability of the signature scheme. This adversary has access to
a quantum-accessible random oracle $\O$ with 
$\O (\pk,m_i,r_j)= (\com_{i,j},\ch_{i,j})$, and to a signing oracle $\S$ for the key $\sk$ producing, on input a classical 
strings $m$, (classical) signatures $\sigma=(r,\com,\ch,\resp) \exec \sign^H(\sk,m)$.

The adversary $\qadv$ gets $\pk$ as an input, and is then allowed to perform up to $q_H=\poly(\lambda)$ queries to $\O$ in superposition, 
and up to $q_S=\poly(\lambda)$ classical queries to $\S$. Recall that the signer still operates on classical bits.
Then, after running for $\poly(\lambda)$ time, adversary $\qadv$ produces (with probability $\epsilon$) a valid forgery $(m,\sigma)$ under 
$\pk$ such that $m$ has never been asked to the signing oracle $\S$ throughout $\qadv$'s execution (i.e., $m$ is a fresh message).
We assume that $q_H$ also covers a classical query of the verifier to check the signature.
Under these assumptions we show how to build an efficient quantum adversary $\qbdv$, with access to $\qadv$ as a subroutine, and which
is able to break the scheme's underlying hard mathematical problem with non-negligible probability. That is, $\qbdv$ on input $x$ generated according to
$\igen(1^\secpar)$, is able to output a valid witness $w'$ to statement $x$, i.e., $(x,w')\in \cR$. The adversary $\qbdv$ works as follows:
\begin{itemize}
\item On input statement $x$, it first runs a simulation of the underlying $\Sigma$-protocol to obtain a valid transcript $(\com^\star,\ch^\star,\resp^\star)$.
This is possible because of the honest-verifier zero-knowledge property. Note also that this does not require access to the random oracle.
Also note that we assume for simplicity that the oblivious commitment is a random string; else we would need to run
$\Smplrnd$ on $\com^\star$ now to derive $\rho$, and use $\rho$ in the hash reply (and argue that this is indistinguishable).
\item Then, $\qbdv$ simulates an oracle $\O_0$ which is obtained by reprogramming a (simulated) quantum random oracle $\O$ over a fraction $\delta$ 
of its possible inputs $(\pk,m,r)$  with the value $(\com^\star, \ch')$. Here, $\delta$ is some non-negligible probability in the security parameter, and $\ch'$ is a
fix, arbitrarily chosen challenge different from $\ch^\star$. That is, $\O_0 ({\pk,m,r}) = (\com^\star,\ch')$ with probability $\delta$,
and random elsewhere.
\item Next, $\qbdv$ invokes $\qadv$ on input $\pk = x$. 
\item Whenever $\qadv$ performs the $i$-th query to $\S$ for signing a message $m_i$, adversary $\qbdv$ does the following:
	\begin{itemize}
	\item choose a random value $r_i \rand \Rnd$;
	\item execute the honest-verifier zero-knowledge simulator $\Sim$ of the identification scheme, obtaining a valid
	(simulated) transcript $(\com_i, \ch_i, \resp_i)$;
	\item reprogram $\O_{i-1}$ with value $(\com_i, \ch_i)$ for the input $(\pk,m_i,r_i)$. We denote by $\O_i$ the reprogrammed oracle 
	after the $i$-th query to the signing oracle;
	\item then output the signature $\sigma_i=(r_i,\com_i,\ch_i,\resp_i)$ as $\S$'s reply to $\qadv$.
	\end{itemize}
\item Finally, when $\qadv$ outputs a (hopefully valid) forgery $(m, \sigma)$, where $\sigma=(r,\com,\ch,\resp)$, algorithm $\qbdv$
aborts if $\com \neq \com^\star$ or $\ch = \ch^\star$. Otherwise, it uses the special soundness extractor $\extractor$ of the underlying
$\Sigma$-protocol on input $(\com^\star,\ch^\star,\resp^\star)$ and $(\com,\ch,\resp)$ to obtain a valid witness $w'$ for $x$.
\end{itemize}
Note that we can formally let $\qbdv$ implement the hash evaluations by a classical algorithm with access to a
random oracle, basically hardwiring all changes due to re-programming into the code of the algorithm. In a second step
we can eliminate the random oracle, either via quantum-accessible pseudorandom functions \cite{AC:BDFLSZ11}, or without
any assumptions by using $q$-wise independent function as shown in~{\cite[Theorem~6.1]{C:Zhandry12}}. These
functions can be implemented by classical algorithms.

We next show that the success probability of our extraction procedure $\qbdv$ is non-negligible given a successful $\qadv$.
The proof follows the common game-hopping technique where we gradually deprive the adversary a (negligible amount) of its success probability. 
We start with $\game_1$ where the adversary attacks the original scheme.
\begin{description}
\item[$\game_1$.] This is $\qadv$'s original attack on the signature scheme as constructed according to Construction~\ref{constr:sig} 
initialized by public key $\pk$. By assumption we have
\[
\Pr\left[\text{$\qadv$ wins $\game_1$}\right] = \epsilon 
\]
for some non-negligible value $\epsilon$.

\item[$\game_2$.] This game is identical to $\game_1$, except that we abort if $\qadv$ outputs a valid forgery $(m,\sigma)$ where $\sigma$ \emph{does not} contain
the pre-selected commitment $\com^\star$ and challenge $\ch'$. Furthermore, we replace the quantum-accessible random oracle $\O$ with the 
oracle $\O_0$ drawn from a semi-constant distribution $\mysc_\delta$. Recall that $\O_0$ is obtained by reprogramming $\O$ on a fraction 
$\delta$ of its entries with the value $(\com^\star,\ch')$, where $\com^\star$ was obtained by a run of the honest-verifier zero-knowledge 
simulator $\Sim$ on input $x$
and $\ch'$ was picked as in $\qbdv$'s simulation. By Lemma~\ref{lem:output} we have
\[
\Pr\left[\text{$\qadv$ wins $\game_2$}\right] \geq \delta \epsilon - \frac{8}{3} q_H^4 \delta^2\; .
\]

\item[$\game_3^{(1)}$.] As $\game_2$, but this time $\O_0$ is reprogrammed to $\O_1$ (on the single point $(\pk,m_1,r_1)$) as soon as 
$\qadv$ performs its $1^{st}$ classical query $m_1$ to $\S$. From then on, the oracle $\O_1$ always answers consistently with this value.
We need to show that this switching does not change the winning probability
significantly.
For this we basically need to show that, so far, the amplitudes of this value $(\pk,m_1,r_1)$ in the queries to the quantum oracle are 
small, else the adversary may be able to spot some inconsistency.

Let $| \Rnd | = 2^{n} = \exp{(\secpar)}$. We define the value $(\pk,m'_i,r'_j)$ to have \emph{high amplitude} if there exists at least 
one of the quantum queries $\ket{\phi_1},\ket{\phi_2},\ldots$ to the quantum-accessible oracle $\O_0$ \emph{before the signing query},
where the amplitude $\alpha_{i,j}$ associated to the corresponding basis element is such that $|\alpha_{i,j}|^2 \geq 2^{\frac{-{n}}{2}}$. 
Otherwise, the tuple is said to have \emph{low amplitude}. Note that each query to the quantum oracle can have at most
$2^{\frac{{n}}{2}}$ tuples with high amplitude, because the (square of the) amplitudes need to sum up to $1$.

When $\O_0$ is reprogrammed to $\O_1$, the choice of $m_1$ is fixed
(i.e., determined by the $1^{st}$ query of $\qadv$ to $S$), but $r_1$ is still chosen uniformly at random in $\Rnd$.
Since $\qadv$ performs at most $q_H$ queries to the quantum-accessible oracle according to $\O_0$
before the signing query, we have thus at most $q_H\cdot 2^{\frac{{n}}{2}}$
tuples with high amplitude before this query. The probability of hitting such a tuple is then given by:
\begin{equation}\label{eqn:highamp}
\Pr\left[ \text{$(\pk,m_1,r_1)$ has high amplitude} \right] \leq q_H \cdot 2^{\frac{-{n}}{2}}.
\end{equation}
Moreover, provided $(\pk,m_1,r_1)$ has \emph{low} amplitude, and since there are at most $q_H +q_S$ query steps, using Lemma~\ref{th:distances} and Lemma~\ref{th:queryprob} we obtain:
\begin{equation}\label{eqn:BBB}
\left| \qadv^{\ket{\O_0}} - \qadv^{\ket{\O_1}} \right| \leq 4 \sqrt{(q_H + q_S) \cdot 2^{\frac{-{n}}{2}}}.
\end{equation}
Let us assume, on behalf of the adversary, that $\qadv$ fails whenever $(\pk,m_1,r_1)$ has high amplitude. Still, from equations \eqref{eqn:highamp} and \eqref{eqn:BBB}, we have:
\begin{align*}
\Pr\left[\text{$\qadv$ wins $\game_3^{(1)}$}\right] & \geq \Pr\left[\text{$\qadv$ wins $\game_2$}\right] - 4 \sqrt{(q_H + q_S) \cdot 2^{\frac{-{n}}{2}}} - q_H \cdot 2^{\frac{-{n}}{2}} \\
& = \delta \epsilon - \frac{8}{3} q_H^4 \delta^2 - \negl{\secpar}\; .
\end{align*}
Here, we use the fact that reprogramming the oracle for $(\pk,m_1,r_1)$ does not change the adversary's success probability
for a forgery \emph{for a fresh message $m$}. That is, since the adversary's forgery is for $m\neq m_1,m_2,\dots$
it cannot simply copy a signature query as a forgery, but must still forge on the original oracle $\O_0$.
Hence the argument about the winning probability applies as it did for $\O_0$.

We now repeat at most $q_\S$ times the game hopping, from $\game_3^{(1)}$ to $\game_3^{(q_\S)}$, every time repeating the previous game
but switching from $\O_{i-1}$ to $\O_i$ during the $i^{th}$ query to $\S$, each time losing at most a negligible factor in the winning 
probability. Note that the probability of hitting a high amplitude with the signature generation in each hop increases to at most
$q_H \cdot 2^{\frac{-{n}}{2}}+q_S\cdot 2^{-n}$ when taking into account the at most $q_S$ hash queries in the previous
signature requests, but this remains negligible.
After $q_\S$ steps we reach the following game.

\item[$\game_3^{(q_\S)}$.] As $\game_2$, but now $\O_0$ is dynamically reprogrammed as a sequence $\O_1 , \ldots , \O_{q_S}$ throughout 
all of the $\qadv$'s queries to $\S$. We have
\[
\Pr\left[\text{$\qadv$ wins $\game_3^{(q_S)}$}\right] \geq \delta \epsilon - \frac{8}{3} q_H^4 \delta^2- \negl{\secpar}\; .
\]

\item[$\game_4$.] As before, but now $\S$ is just simulated through the zero-knowledge simulator $\Sim$ of the underlying $\Sigma$-protocol. 
If, by contradiction, $\qadv$'s winning probability is affected by more than a negligible amount in so doing, then we could use $\qadv$ to 
build an efficient distinguisher between `real' and `simulated' transcripts of the $\Sigma$-protocol.  This would require a distinguisher
with access to a random oracle, in order to simulate the game. According to \cite[Theorem 6.1]{C:Zhandry12}, however, we can
simulate the oracle via $q$-wise independent functions (which exists without requiring cryptographic assumptions).
Furthermore, a hybrid argument can be used to reduce the case of $q_S$ proofs to a single proof.
\[
\Pr\left[\text{$\qadv$ wins $\game_4$}\right] \geq \delta \epsilon  -\frac{8}{3} q_H^4 \delta^2 - \negl{\secpar}\; .
\]

\item[$\game_5$.] Finally, in this game 
the special soundness extractor $\extractor$ is run on the transcript obtained from $\qadv$'s output from the previous game.
Change the winning condition of $\qadv$ such that the adversary wins if this extraction yields a valid witness $w'$ for $x$.
If the winning probability 
in this game is more than negligibly far from the winning probability of $\qadv$ in the previous game 
then this can only be due to the fact that the simulated proof with $(\com^\star,\ch^\star,\resp^\star)$ cannot be
accepted by the verifier; else the extractor would be be guaranteed to work for this proof and the (accepted) signature.
But this would allow an easy distinguisher against the zero-knowledge property, similar to the previous games. Hence:
\[
\Pr\left[\text{$\qadv$ wins $\game_5$}\right] \geq \delta \epsilon - \frac{8}{3} q_H^4 \delta^2 - \negl{\secpar}\; .
\]
\end{description}
Note that $\qadv$'s winning condition in the final game corresponds exactly to the probability of $\qbdv$
successfully deriving a witness $w'$ for its input $x$. This winning probability can be maximized (by zeroing the first derivative in $\delta$) by choosing:
\[
\delta = {\frac{3\epsilon}{16 q_H^4}}\; .
\]
This yields:
\[
\Pr\left[\text{$\qadv$ wins $\game_5$}\right] \geq \frac{3\epsilon^2}{16 q_H^4} - \negl{\secpar}\;,
\]
which is non-negligible.
This concludes the proof of the main theorem.\qed
\end{proof}

\subsection{Example Instantiation}

In this section, we present an instantiation of $\Sigma$-protocols with \octxt\ which is secure against quantum adversaries. 
We look at the lattice-based signature scheme by Lyubashevsky~\cite{EC:Lyubashevsky12} which is  obtained by applying the FS 
transformation. The security of this signature scheme is reduced to the hardness of the Small Integer Solution (SIS) problem, 
which is believed to be hard even for quantum algorithms.

Similarly, other works using the FS transformation and relying on the quantum hardness of the underlying primitives, are not 
known to be necessarily secure against quantum adversaries, 
e.g., \cite{AC:Lyubashevsky09,EPRINT:BarMis10,AC:GorKatVai10,C:SakShiHiw11,PKC:Sakumoto12,CHES:GnuLyuPop12,EC:AFLT12,SCN:CamNevRuc12,EC:AJLTVW12}. 
This holds also for signature schemes obtained from the FS extension~\cite{AFRICACRYPT:ADVGC12} for multi-pass identification 
protocols (e.g., \cite{PROVSEC:CLRS10,SAC:CayVerAla10,C:SakShiHiw11,PKC:Sakumoto12}). Furthermore, the FS transform can be applied 
on the identification and zero-knowledge protocols \cite{C:MicVad03,PKC:Lyubashevsky08,AC:KawTanXag08,MGS11} as they are secure 
against quantum adversaries. Still, the converted signature scheme is not necessarily quantum-secure anymore. A similar patch 
approach, as we describe for~\cite{EC:Lyubashevsky12}, can also applied to most of the aforementioned schemes.

\myparagraph{Trapdoors to SIS instances}
We are going to illustrate how a $\Sigma$-protocol with \octxt\ can be obtained through our patch. Basically, we need to provide 
the prover with a trapdoor to extract a candidate preimage to a given commitment. In the scheme from~\cite{EC:Lyubashevsky12} the 
prover has to solve an SIS instance. Roughly speaking, the prover has to find preimages for functions 
$f_\mathbf{A}(\textbf{v}):=\mathbf{Av}$ for $\mathbf{A}\rand \ZZ_q^{n \times m}$ where $v$ is distributed according to the discrete 
Gaussian distribution $D_s$ over $\ZZ^m$ with standard deviation~$s$. The parameters $q,n,m$ as well as $s$ determine the hardness 
of the SIS instance.

From~\cite{Ajtai99,STOC:GenPeiVai08,AlwenP09,C:Peikert10,EC:MicPei12} we know the existence of trapdoors allowing to sample such 
preimages. The most efficient construction from~\cite{EC:MicPei12} finds preimages of length $\beta\approx s \sqrt{m}$ for lattice 
dimension $m\approx 2n \log q$ with (at least) $s\approx 16 \sqrt{n \log q}$. Let $T$ denote the trapdoor for function 
$f_\mathbf{A}$ which is generated together with matrix $\mathbf{A}\rand \ZZ_q^{n \times m}$ by algorithm $\mathsf{GenTrap}(1^n,1^m,q)$. 
Then, the function $\mathsf{SampleD}(\mathbf{T},\mathbf{A},\mathbf{X},s)$ samples an element $\mathbf{x}$ from the distribution 
within negligibly close (in $n$) statistical distance of $D^m_s$ such that $\mathbf{Ax}=\mathbf{X}$ (see, e.g.,
Algorithm~3 of \cite{EC:MicPei12}). 

\myparagraph{Our Patch on the ID Scheme within~\cite{EC:Lyubashevsky12}} We take as input the identification (ID) scheme from which 
the signature scheme in~\cite{EC:Lyubashevsky12} is derived from. Now, we provide the prover with necessary trapdoor information in 
order to enable the prover to respond to a challenge for \octxt. The scheme is parameterized by security parameter $n,q,d,k$, and 
$\eta$. Moreover, $m\approx 2n \log q$\footnote{In the original paper, the author sets $m\approx 64+n \log q/\log (2d+1)$. We slightly 
increase $m$ in order to obtain a trapdoor. This merely strengthens the underlying hardness.}, $\kappa$ is chosen such that 
$2^\kappa \cdot \binom{k}{\kappa} \ge 2^{100}$, and $s\approx 12d\kappa\sqrt{m}$.

The prover first runs $(\mathbf{A}\rand \ZZ_q^{n \times m},\mathbf{T})\exec \mathsf{GenTrap}(1^n,1^m,q)$.
The prover's secret is a matrix \linebreak $\mathbf{S}\rand \{-d,\ldots,0,\ldots,d\}^{m \times k}$ and the trapdoor $T$. The corresponding 
public key consists of the matrices $\mathbf{A}$ and $\mathbf{R}=\mathbf{AS}$. The prover \prov\ picks first a random 
string~$r\rand\{0,1\}^\secpar$, and sends it over to the verifier. The verifier \ver\ randomly picks a challenge 
$\mathbf{c}\rand V=\{\mathbf{v}\;:\; \mathbf{v}\in\{-1,0,1\}^k,\norm{\mathbf{v}}_1 \le \kappa\}$ and computes the commitment 
$\mathbf{Y}\exec \mathbf{Ay}$ for random $\mathbf{y}\in\ZZ^m$ sampled according to $D_s^m$. The verifier forwards both $c$ and 
$\mathbf{Y}$  to \prov.
Now, \prov\ samples first a valid preimage $\mathbf{y}'$ of $\mathbf{Y}$ under function~$f_\mathbf{A}$ through algorithm 
$\mathsf{SampleD}$, and then computes $\mathbf{z}\leftarrow \mathbf{Sc+y'}$. With a certain probability $\rho$ (depending on the 
public parameters, $\mathbf{z}$, and $\mathbf{Sc}$) the pair $(\mathbf{z},\mathbf{c})$ is handed over to \ver. Upon receiving 
$\mathbf{z}$, \ver\ accepts iff $\norm{\mathbf{z}}\le \eta s \sqrt{m}$ and $\mathbf{Rc}=\mathbf{Y}-\mathbf{Az}$.
The underlying interactive scheme is also given in Figure~\ref{fig:ci-id}.
Note that we can assume that $\rho$ is sufficiently large such that we simply let the signer occasionally fail; to get a valid signature
repeatedly call the signer about the same message $m$.

\begin{figure}[t]
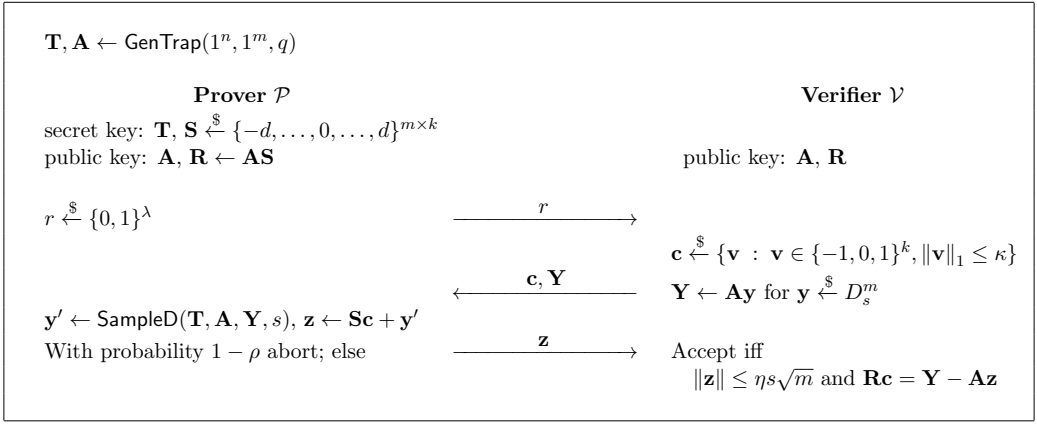

\begin{center}
{
\scalebox{0.75}{
\begin{tabular}{|@{\hspace*{0.75em}}ll@{\hspace*{1.5em}}c@{\hspace*{1.5em}}l@{\hspace*{0.75em}}|}
\hline
& & & \\
& $\mathbf{T},\mathbf{A}\leftarrow\mathsf{GenTrap}(1^n,1^m,q)$ & & \\
& & & \\
&\multicolumn{1}{c}{\textbf{Prover} $\prov$} & & \multicolumn{1}{c|}{\textbf{Verifier} $\ver$} \\
& \multicolumn{2}{l}{secret key: $\mathbf{T}$,  $\mathbf{S}\rand \{-d,\ldots,0,\ldots,d\}^{m \times k}$}   &  \\
& \multicolumn{1}{l}{public key: $\mathbf{A}$, $\mathbf{R}\leftarrow\mathbf{AS}$}  &  & \multicolumn{1}{l|}{public key: $\mathbf{A}$, $\mathbf{R}$} \\
& & & \\
& $r\rand \{0,1\}^\secpar$& $\xrightarrow{\makebox[8em]{$r$}}$& \\
& & & $\mathbf{c} \rand \{\mathbf{v}\;:\; \mathbf{v}\in\{-1,0,1\}^k,\norm{\mathbf{v}}_1 \le \kappa\}$ \\
& & $\xleftarrow{\makebox[8em]{$\mathbf{c},\mathbf{Y}$}}$ &  $\mathbf{Y}\exec \mathbf{Ay} \text{ for } \mathbf{y}\rand D^m_s$ \\
& $\mathbf{y'} \leftarrow \mathsf{SampleD}(\mathbf{T},\mathbf{A},\mathbf{Y},s)$, $\mathbf{z} \leftarrow \mathbf{Sc}+\mathbf{y'}$ &&\\
& With probability $1-\rho$ abort; else & $\xrightarrow{\makebox[8em]{$\mathbf{z}$}}$& Accept iff \\ & & & \quad $\norm{\mathbf{z}}\le \eta s \sqrt{m}$ and $\mathbf{Rc}=\mathbf{Y}-\mathbf{Az}$\\
&& & \\ \hline
\end{tabular}
}
}
\end{center}
\caption{\small Patched $\Sigma$-Protocol from~\cite{EC:Lyubashevsky12}}
\label{fig:ci-id}
\end{figure}

\myparagraph{On the Quantum Security}
We stress that the resulting (identification) scheme has now \octxt, i.e., it satisfies Definition~\ref{def:vcc}.

Note that the security as an identification scheme does not depend on the first message sent by the prover. However, if one 
converts the ID protocol to a signature scheme, this message serves as the randomness input to the hash function together 
with the message to be signed. As such, a signature on a message $m$ consists of (randomized) $\sigma=(r,\resp)$ where 
$(\com,\ch)=(\mathbf{Y},\mathbf{c})\exec H(\pk,r,m)$.
Hence, the signature scheme obtained by the FS transformation (see Construction~\ref{constr:sig}) on the above identification 
scheme is secure against quantum adversaries in the quantum-accessible random-oracle model following from the result of 
Theorem~\ref{th:fs}.

Notice that our resulting signature scheme is close to a variant of the hash-and-sign  signature scheme (GPV) by Gentry, 
Peikert, and Vaikuntanathan~\cite{STOC:GenPeiVai08}. The GPV signature scheme uses a pre-images sampleable trapdoor function 
(PSF), and signing a message here is basically providing a preimage to the hashed message. If in our construction, the 
commitment and challenge is merely the hash of the message, both signatures coincide. In a concurrent work~\cite{EPRINT:BonZha13}, 
the GPV signature scheme is proven secure in the QROM. Interestingly, this gives us two different proof approaches for similar 
schemes. While Boneh and Zhandry~\cite{EPRINT:BonZha13} give security results for the hash-and-sign paradigm and thereby show 
the security of the GPV signature, our security proof follows immediately from Theorem~\ref{th:fs} once the scheme by Lyubashevsky 
is patched to have oblivious commitments. Note that, ideally, our result applies to future FS schemes with improved
efficiency as well.

\section{Conclusion}\label{sec:conclusion}

Our impossibility result indicates that the Fiat-Shamir paradigm should be
taken with great caution when used to argue quantum resistance.
A proof for a scheme in the classical random-oracle model, even if the underlying
problem is quantum-resistant, may not yield a protocol which
is also secure in the QROM. For some schemes, however, a formal proof in
the quantum random oracle is possible after a minor modification. Interestingly,
this modification may first weaken the scheme, e.g., remove active security.

It remains open to bypass our black-box separation result by other means, e.g.,
by using witness-\emph{dependent} commitments, to extend the class of admissible
protocols for which the transformation yields a secure scheme in the QROM. Alternatively,
one may try to give a ``Fiat-Shamir-like'' transformation which also yields
secure signature schemes in the QROM. Natural candidates would be the constructions
with online extractors by Pass~\cite{C:Pass03} and by Fischlin~\cite{C:Fischlin05}, which
potentially also circumvent the rewinding problem.
We leave this as an interesting open question.

\section*{Acknowledgments}

We thank the anonymous reviewers for some valuable comments. We also thank Dominique
Unruh for useful discussions on black-box extractors. Marc Fischlin is supported by the
Heisenberg grant Fi 940/3-1 of the German Research Foundation (DFG). Tommaso Gagliardoni
is supported by the German Federal Ministry of Education and Research (BMBF) within
EC-SPRIDE. This work was also supported by CASED (\texttt{www.cased.de}).

\newcommand{\etalchar}[1]{$^{#1}$}
\def\shortbib{0}\providecommand{\modbibauthor}[1]{#1}
  \providecommand{\modbibtitle}[1]{\textit{#1}}
  \providecommand{\modbibbtitle}[1]{\textit{#1}}
  \providecommand{\modbibjournal}[1]{#1} \providecommand{\modbibyear}[1]{#1}
  \providecommand{\modbibvolume}[1]{\normalfont #1}
  \providecommand{\modbibseries}[1]{\normalfont #1}
  \providecommand{\modbibtype}[1]{#1} \providecommand{\modbibpages}[1]{#1}
  \providecommand{\modbibbooktitle}[1]{\normalfont #1}
  \providecommand{\modbibpublisher}[1]{#1}
  \providecommand{\modbibeditor}[1]{#1}
  \providecommand{\modbibeditortxt}[1]{(ed.)}
  \providecommand{\modbibvolumetxt}{Volume} \providecommand{\modbibintxt}{}
  \providecommand{\modbiboftxt}{ of } \providecommand{\modbibpagetxt}{page}
  \providecommand{\modbibpagestxt}{pages}
  \providecommand{\modbibnumbertxt}{number}
  \providecommand{\modbibcombibitem}[2]{\bibitem[#1]{#2}}
  \providecommand{\modbibcomvolume}[1]{\modbibvolumetxt{} #1}
  \providecommand{\modbibcompage}[1]{\modbibpagetxt{} #1}
  \providecommand{\modbibcompages}[1]{\modbibpagestxt{} #1}
  \providecommand{\modbibcomthebibliography}[1]{

\end{document}